\newtheorem{theorem}{Theorem}
\newtheorem{definition}{Definition}
\newtheorem{lemma}{Lemma}
\newtheorem{corollary}{Corollary}
\definecolor{customcolorblue}{HTML}{4573ae}
\begin{document}

\title{GHZ-Preserving Gates and Optimized Distillation Circuits}

\author{Mingyuan Wang}
\email{mingyuanwang@umass.edu}
\affiliation{Department of Physics, University of Massachusetts Amherst}

\author{Guus Avis}
\affiliation{College of Information and Computer Science, University of Massachusetts Amherst}

\author{Stefan Krastanov}
\email{skrastanov@umass.edu}
\affiliation{Department of Physics, University of Massachusetts Amherst}
\affiliation{College of Information and Computer Science, University of Massachusetts Amherst}

\date{\today}

\begin{abstract} 
Greenberger-Horne-Zeilinger (GHZ) states play a central role in quantum computing and communication protocols, as a typical multipartite entanglement resource. This work introduces an efficient enumeration and simulation method for circuits that preserve and distill noisy GHZ states, significantly reducing the simulation complexity of a gate on $n$ qubits, from exponential $O(2^n)$ for standard state-vector methods or $O(n)$ for Clifford circuits, to a constant $O(1)$ for the method presented here. This method has profound implications for the design of quantum networks, where preservation and purification of entanglement with minimal resource overhead is critical. In particular, we demonstrate the use of the new method in an optimization procedure enabled by the fast simulation, that discovers GHZ distillation circuits far outperforming the state of the art. Fine-tuning to arbitrary noise models is possible as well. We also show that the method naturally extends to graph states that are local Clifford equivalent to GHZ states.

\end{abstract}

\maketitle

\section{Introduction}
In quantum information theory, multipartite entanglement is a key resource for a wide range of quantum protocols, from communication and cryptography to distributed quantum computing. Among the most studied forms of multipartite entanglement are Greenberger-Horne-Zeilinger (GHZ) states, which can be represented for $n$ qubits as
\begin{equation}
|\text{GHZ}\rangle = \frac{1}{\sqrt{2}} (|0\rangle^{\otimes n} + |1\rangle^{\otimes n}).
\end{equation}
GHZ states are particularly valuable due to their characteristic non-classical correlations across multiple qubits, which play a central role in fundamental quantum tasks such as quantum secret sharing~\cite{Hillery1999}, quantum conference key agreement~\cite{Murta2020,Grasselli2022,Hahn2020}, error correction~\cite{bell2014}, and distributed quantum computing~\cite{Yimsiriwattana2004,Zhong2021}. However, their practical use is limited by the fragility of their entanglement under noise and imperfect local gates. This limitation motivates the use of entanglement distillation protocols, which aim to recover high-fidelity entangled states from multiple noisy copies via local operations and classical communication (LOCC)~\cite{bennett1996,bennett1996mixed,dur2007,Acin2000}. 

The computational complexity of simulating entanglement distillation and error correction can be a bottleneck to large-scale circuit optimization. The exponential cost $O(2^n)$ of state-vector simulation of $n$ qubits can easily be avoided thanks to the more efficient Clifford-circuits formalism and its $O(n)$ scaling for gate application~\cite{aaronson2004,Gottesman1998}. However, even $O(n)$ can become prohibitive if it is employed in the inner loop of other algorithms, like circuit optimizers over large systems, or when real-time processing is required. This highlights the need for more efficient methods to simulate GHZ states.

Another major obstacle lies in the vastness of the Clifford-circuit design space itself. Even though Clifford operations admit efficient simulation, the number of distinct Clifford circuits grows superexponentially with the number of qubits and gate layers, making exhaustive or even heuristic search intractable. This severely limits our ability to discover optimal or near-optimal GHZ-distillation protocols, especially when combined with noisy inputs and real-world hardware limitations.

To address both challenges, we introduce a new framework based on the classification of GHZ-preserving operations—those local Clifford gates that map GHZ-basis states to other GHZ-basis states~(terms defined in the following section). This restriction dramatically reduces the search space, as it excludes circuits that cannot possibly contribute to valid distillation. We then analyze the algebraic structure of the GHZ-preserving gate set and show that it forms a finite group that factorizes into a few small easy-to-enumerate subgroups. This group structure enables exact enumeration of allowed circuits and allows us to simulate their action in constant $O(1)$ time by tracking permutations over a discrete GHZ basis. This advance enables efficient simulations even for large systems, making it feasible to model large networks and to employ computationally expensive circuit-optimization techniques. Our framework generalizes similar techniques used for Bell pair distillation~\cite{addala2023,krastanov2019,dehaene2003}.

As a showcase of these algorithmic improvements, we use our new efficient simulation method together with a circuit optimizer based on genetic algorithms and simulated annealing to optimize GHZ-distillation circuits, while modeling the effects of network and gate noise. The efficient simulation enabled by our framework allows us to explore a much larger number of potential circuit designs, leading to more comprehensive optimization, and the circuits we produce significantly outperform anything else available in the literature. This is particularly relevant for quantum networks, where high-fidelity entanglement is essential for tasks such as distributed quantum computation and quantum conference key agreement.

Furthermore, we extend our method to other quantum states, such as star-shaped and complete graph states, which are local Clifford equivalent to GHZ states.

The remainder of this paper is organized as follows: In Section 2, we define the GHZ-preserving group of gates and present a group decomposition that enables efficient enumeration and gate simulation. We assume familiarity with the stabilizer formalism in this paper. Section 3 reviews current GHZ-distillation techniques, discusses noise modeling, and presents our superior distillation circuits, as well as the optimization techniques that generated them. In Section 4, we apply our decomposition to star-shaped and complete graph states, and we conclude in Section 5 with a discussion of implications to networking and other domains and future research directions. Detailed proofs of our group-decomposition theorems are provided in the Appendix~\ref{detailed_proofs}.

\section{GHZ-preserving Group and Circuit Decomposition}
\label{sec:ghzpreserving}

In this section, we formally define what it means for a unitary operation to \emph{preserve} GHZ states and categorize all such operations.
Informally, the set of "GHZ-preserving" operations is the set of gates that enable the distillation of such states (assuming noise is of the Pauli type, which is typical). We introduce both a fast simulation method for the application of such gates, and a compact enumeration for these gates. In later sections we will use these two capabilities to run circuit-optimization algorithms that find GHZ-distillation circuits much better than the state of the art.

At first, we focus on \emph{two copies} of an $n$-qubit GHZ state, each distributed among $n$ nodes. We will prove that any GHZ-preserving transformation can be built out of Pauli operations, \emph{homogeneous} unitaries consisting of the same gate applied to all nodes, (the \textit{H group}), and \emph{bilocal} unitaries consisting of the same gate applied to two nodes (the \textit{B group}). This decomposition greatly simplifies the analysis and simulation of GHZ-based protocols by reducing the computational complexity and enumerating the set of useful gates.

\subsection{Decomposition of the GHZ-preserving Group}
\label{subsec:ghzpreservingdef}

Consider two identical $n$-qubit GHZ states, each shared among $n$ nodes. Concretely, each node $i$ holds two qubits in total, one from each GHZ state.


\begin{definition}[GHZ basis]
\label{def:GHZbasis}
The $n$-qubit GHZ state is
\begin{equation}
\lvert \mathrm{GHZ}\rangle 
=\;
\frac{1}{\sqrt{2}}\Bigl(\lvert 0\rangle^{\otimes n} + \lvert 1\rangle^{\otimes n}\Bigr). 
\end{equation}
Any $n$-qubit state obtained from $\lvert \mathrm{GHZ}\rangle$ by local Pauli operations (and hence sharing the same stabilizer generating set, up to differences in sign) is said to be in the \emph{GHZ basis}. Concretely, this basis of $2^n$ states can be written as
\begin{equation}
\left\{\,
\Bigl(~\!\!\bigotimes_{i=1}^n P_i\Bigr)
\,\lvert \mathrm{GHZ}\rangle
\;\bigm|\;
P_i \in \{I,X,Y,Z\}
\right\}.    
\end{equation}

\end{definition}

For example, in the three-qubit case, the standard $\lvert \mathrm{GHZ}\rangle$ is stabilized by the generating operators $XXX$, $ZZI$, and $IZZ$. Any other state stabilized by these operators (up to a sign change) also lies in what we call the "GHZ basis" (see Table~\ref{tab:ghz-basis}).

\begin{table}[h]
\centering
\begin{tabular}{c c c c c}
\toprule
GHZ basis & Computational basis & $X_1 X_2 X_3$ & $Z_1 Z_2$ & $Z_2 Z_3$ \\
\midrule
$|\phi^{+++}\rangle$ & $(|000\rangle + |111\rangle)/\sqrt{2}$ & $+1$ & $+1$ & $+1$ \\
$|\phi^{++-}\rangle$ & $(|001\rangle + |110\rangle)/\sqrt{2}$ & $+1$ & $+1$ & $-1$ \\
$|\phi^{+-+}\rangle$ & $(|011\rangle + |100\rangle)/\sqrt{2}$ & $+1$ & $-1$ & $+1$ \\
$|\phi^{+--}\rangle$ & $(|010\rangle + |101\rangle)/\sqrt{2}$ & $+1$ & $-1$ & $-1$ \\
$|\phi^{-++}\rangle$ & $(|000\rangle - |111\rangle)/\sqrt{2}$ & $-1$ & $+1$ & $+1$ \\
$|\phi^{-+-}\rangle$ & $(|001\rangle - |110\rangle)/\sqrt{2}$ & $-1$ & $+1$ & $-1$ \\
$|\phi^{--+}\rangle$ & $(|011\rangle - |100\rangle)/\sqrt{2}$ & $-1$ & $-1$ & $+1$ \\
$|\phi^{---}\rangle$ & $(|010\rangle - |101\rangle)/\sqrt{2}$ & $-1$ & $-1$ & $-1$ \\
\bottomrule
\end{tabular}
\caption{The three-qubit GHZ-basis states and the corresponding phase of the standard GHZ stabilizer generators.}
\label{tab:ghz-basis}
\end{table}

\begin{definition}[GHZ preserving]
\label{def:ghzpreserving}
A unitary operator $U$
is called \emph{GHZ preserving} if, whenever it is applied to a product of GHZ-basis states, the resulting output is again a product of GHZ-basis states. E.g., for the case of a system of two GHZ states, for any $\lvert \mathrm{GHZ}_i\rangle$ and $\lvert \mathrm{GHZ}_j\rangle$ in the GHZ basis we will have
\begin{equation}
U \Bigl(
\lvert \mathrm{GHZ}_i\rangle \otimes \lvert \mathrm{GHZ}_j\rangle
\Bigr) = 
\lvert \mathrm{GHZ}_k\rangle \otimes \lvert \mathrm{GHZ}_l\rangle,    
\end{equation}
where $\lvert \mathrm{GHZ}_k\rangle$ and $\lvert \mathrm{GHZ}_l\rangle$ are also in the GHZ basis.
\end{definition}

Equivalently, a GHZ-preserving unitary on 2 $n$-qubit GHZ-basis states can be viewed as inducing a permutation on the $2^n\times 2^n$ basis formed by the tensor products of GHZ-basis states from each copy. This realization is important for the more efficient algorithm we introduce for simulating the action of a gate. These GHZ-preserving gates are the ones out of which entanglement-distillation circuits can be built, as they allow us to "move" Pauli errors between qubits such that they can be detected.

We focus on GHZ-preserving operations because typical entanglement-distillation protocols are designed to iteratively transform multiple noisy GHZ copies into a higher-fidelity GHZ state, and such procedures inherently require that each step preserves the overall GHZ structure to be meaningful.

Now we can introduce our main result (stated more formally in Eq.~\eqref{eq:main_result}). Every GHZ-preserving operation can be represented as a product of Pauli operations and operations belonging to the following groups:
\begin{enumerate}
    \item The \textbf{B group}, consisting of $2n$-qubit unitaries created by \emph{bilocally} applying two-qubit gates—i.e., two-qubit gates act only on two of the $n$ nodes.
    \item The \textbf{H group}, consisting of $2n$-qubit unitaries created by \emph{homogeneously} applying two-qubit gates—i.e., the same two-qubit gate applied at each of the $n$ nodes between its two local qubits.
\end{enumerate}

In the following sections, we describe each subgroup in detail, stating many of their properties, followed by a discussion of how these subgroups combine to form generic GHZ-preserving gates. For more detailed formal descriptions of these subgroups and the proof that these subgroups together with the Pauli operations span the entirety of the GHZ-preserving set of gates, consult Appendix~\ref{detailed_proofs}.

\begin{table*}[t]
\centering
\begin{tabular}{|c|c|c|c|}
\hline
XI $\to$ + XI & XI $\to$ + XI & XI $\to$ + YI & XI $\to$ + YI \\
IX $\to$ + IX & IX $\to$ + IY & IX $\to$ + IX & IX $\to$ + IY \\
ZI $\to$ + ZI & ZI $\to$ + ZI & ZI $\to$ + ZI & ZI $\to$ + ZI\\
IZ $\to$ + IZ & IZ $\to$ + IZ & IZ $\to$ + IZ & IZ $\to$ + IZ\\
\hline
\multicolumn{1}{|p{3cm}|}{\centering $Identity$} & \multicolumn{1}{p{3cm}|}{\centering $I \otimes S$} & \multicolumn{1}{p{3cm}|}{\centering $S \otimes I$} & \multicolumn{1}{p{3cm}|}{\centering $S \otimes S$}

\end{tabular}
\begin{tabular}{|c|c|c|c|}
\hline
XI $\to$ + XZ & XI $\to$ + XZ & XI $\to$ + YZ & XI $\to$ + YZ \\
IX $\to$ + ZX & IX $\to$ + ZY & IX $\to$ + ZX & IX $\to$ + ZY \\
ZI $\to$ + ZI & ZI $\to$ + ZI & ZI $\to$ + ZI & ZI $\to$ + ZI \\
IZ $\to$ + IZ & IZ $\to$ + IZ & IZ $\to$ + IZ & IZ $\to$ + IZ \\
\hline
\multicolumn{1}{|p{3cm}|}{\centering $CZ$} & \multicolumn{1}{p{3cm}|}{\centering $CZ \cdot I \otimes S$} & \multicolumn{1}{p{3cm}|}{\centering $CZ \cdot S \otimes I$} & \multicolumn{1}{p{3cm}|}{\centering $CZ \cdot S \otimes S$} \\
\hline
\end{tabular}
\caption{Two-qubit building-block gates, when applied "bilocally" resulting in the B group.}
\label{tab:Fgates}
\end{table*}

\subsection{Bilocal Gates (B Group)}
\label{sec:Bgroup}

We consider first the class of gates that act \emph{bilocally} on exactly two distinct nodes. 
Concretely, they are generated by the controlled-$Z$ (CZ) gate and the single-qubit phase ($S$) gate on each of the two qubits. 
These eight gates form an Abelian group isomorphic to $\mathbb{Z}_2 \otimes \mathbb{Z}_2 \otimes \mathbb{Z}_2$, reflecting three independent binary choices:
\begin{itemize}
    \item Apply or do not apply CZ;
    \item Apply or do not apply $S$ on the first qubit;
    \item Apply or do not apply $S$ on the second qubit.
\end{itemize}
See Table~\ref{tab:Fgates} for an explicit list.

\subsection{Homogeneous Gates (H Group)}

We next examine a second class of GHZ-preserving operations, namely those that act \emph{homogeneously} across all $n$ nodes. 
In other words, each node applies the same two-qubit gate to its local pair of qubits (one qubit from each copy of the GHZ). 
A detailed enumeration in Appendix~\ref{Appendix:H} shows exactly six possibilities, all generated by the CNOT gate (in both orientations, i.e.,\ \(\mathit{CNOT}_{12}\) and \(\mathit{CNOT}_{21}\)). 
The resulting group is non-Abelian, of order 6, and isomorphic to the dihedral group \(\mathbb{D}_3\) (the symmetry group of an equilateral triangle). 
Concretely, the set of these six homogeneous gates includes the identity, SWAP, and four conditional Pauli gates. See Table~\ref{tab:Hgates} for an explicit list.

\begin{table*}[t]
\centering
\begin{tabular}{|c|c|c|c|c|c|}
\hline
XI $\to$ + XI & XI $\to$ + IX & XI $\to$ + XX & XI $\to$ + IX & XI $\to$ + XX & XI $\to$ + XI \\
IX $\to$ + IX & IX $\to$ + XI & IX $\to$ + IX & IX $\to$ + XX & IX $\to$ + XI & IX $\to$ + XX \\
ZI $\to$ + ZI & ZI $\to$ + IZ & ZI $\to$ + ZI & ZI $\to$ + ZZ & ZI $\to$ + IZ & ZI $\to$ + ZZ \\
IZ $\to$ + IZ & IZ $\to$ + ZI & IZ $\to$ + ZZ & IZ $\to$ + ZI & IZ $\to$ + ZZ & IZ $\to$ + IZ \\
\hline
\multicolumn{1}{|p{2cm}|}{\centering $Identity$} & \multicolumn{1}{p{2cm}|}{\centering $SWAP$} & \multicolumn{1}{p{2cm}|}{\centering $CNOT_{12}$} & \multicolumn{1}{p{2cm}|}{\centering $DCX_{21}$} & \multicolumn{1}{p{2cm}|}{\centering $DCX_{12}$} & \multicolumn{1}{p{2cm}|}{\centering $CNOT_{21}$} \\
\hline
\end{tabular}
\caption{Two-qubit building-block gates, when applied "homogeneously" resulting in the H group.}
\label{tab:Hgates}
\end{table*}

\subsection{Generic GHZ-preserving gate}

Based on the group structure described in the preceding paragraphs, any GHZ-preserving gate acting on two $n$-qubit GHZ-basis states can be decomposed as follows (see Fig.~\ref{circuits} for a depiction and the Appendix for a detailed proof):

\begin{itemize}
    \item one gate from the H group 
    \item $n-1$ gates from the B group, each acting on a different pair of nodes
    \item Pauli gates applied on \(n-1\) nodes, for each of the two GHZ-basis states (i.e., \(2n - 2\) single-qubit Pauli gates in total).
\end{itemize}

More precisely, every element $g$ in the GHZ-preserving group can be written in the form
\begin{equation} \label{eq:main_result}
    g = \prod_{j=1}^{2n-2} p_j \prod_{i=1}^{n-1} b_{i, i+1}  h ,
\end{equation}
where $h$ is a $2n$-qubit unitary from the H group, each $b_{i, i+1}$ is a $2n$-qubit unitary from the B group, and each $p_j$ is a single-qubit Pauli operator acting on a different qubit.

\begin{figure}
        \includegraphics[width=0.45\textwidth]{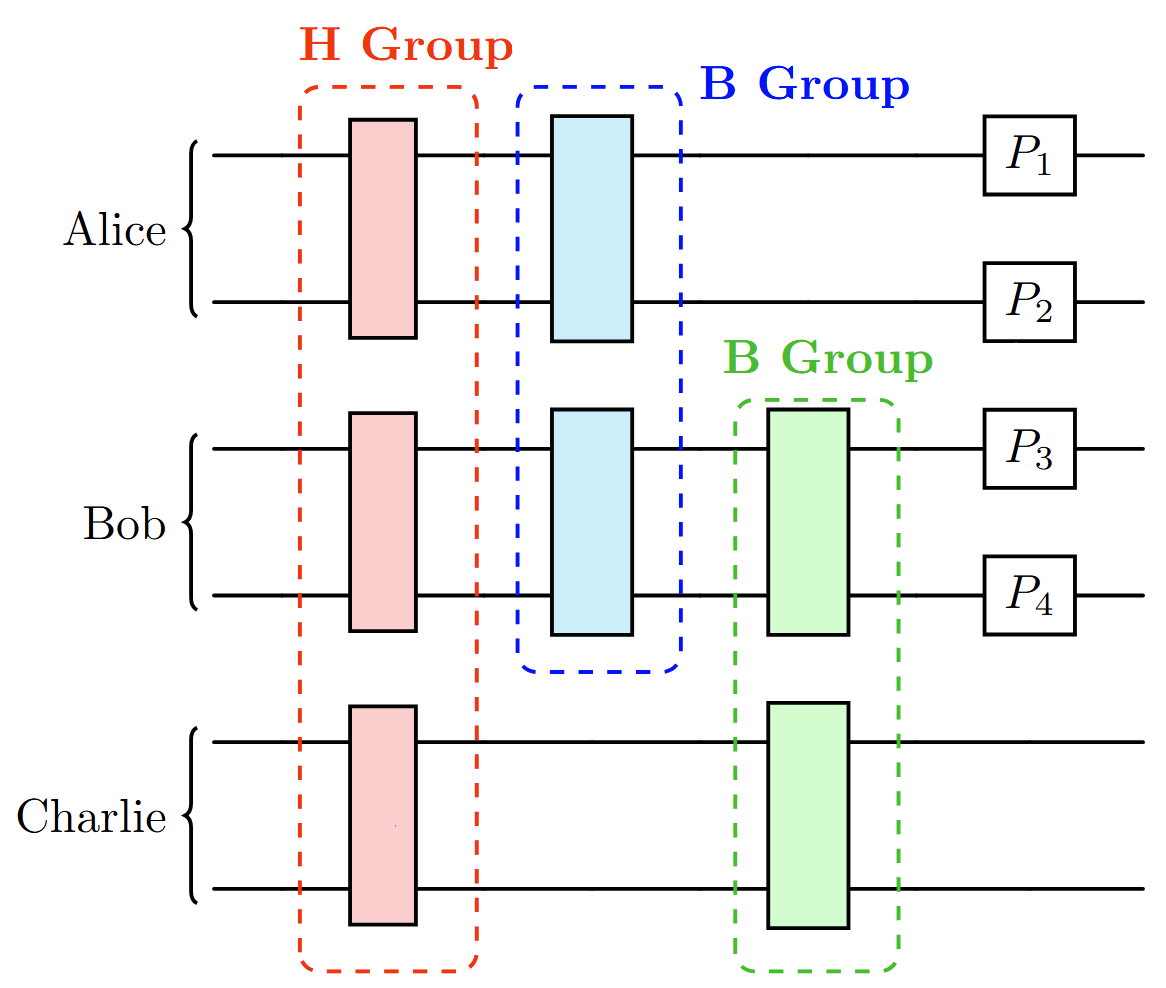}
        \caption{\textbf{General GHZ-preserving circuits.} Example with two 3-qubit GHZ states, every node holding one qubit from each state (lines 1, 3 and 5 correspond to the first GHZ state, lines 2, 4 and 6 correspond to the second GHZ state). The red blocks (left) together make up a unitary from the H group, where all nodes apply the same gate chosen from $\{I, SWAP, CNOT_{12}, CNOT_{21},DCX_{12},DCX_{21}\}$. The blue (second from left) and green (third from left) blocks make up different unitaries from the B group, where $n-1$ pairs of nodes each apply the same gate chosen from $\left\{ (g\cdot(f_1 \otimes f_2))\, |\, g \in \{I \otimes I, CZ\}, (f_1, f_2) \in \{I, S\}^2\right\} $ (see Tables~\ref{tab:Fgates} and \ref{tab:Hgates}). Lastly, we need $2n-2$ Pauli operations, $P_1,P_2,P_3,P_4$.}
        \label{circuits}
\end{figure}
    
For two $n$-qubit GHZ-basis states, the total number of possible "phaseless" GHZ-preserving unitaries (obtained by ignoring the Pauli gates in the above decomposition) is $6\times 8^{n-1}$. Even when multiple GHZ states are involved, each gate still acts on only two of them at a time, so this decomposition fully captures all possible circuit components. Besides the formal proof in the Appendix, we also numerically verified the validity of this decomposition by exhaustively enumerating all possible combinations of Clifford operations up to $n=6$ and counting the ones that are GHZ preserving (for similar numerical enumerations consult also~\cite{jansen2022}). 
Of note is not only that this decomposition gives a complete and non-redundant representation of GHZ-preserving unitaries, but also that each gate can be described by a fixed-length sequence of operations, enabling highly efficient circuit simulations and making it much easier to search over all such unitaries in a structured way.

\subsection{Bitstring Representation of GHZ States}

Given the decomposition of GHZ-preserving operations into the H and B groups, we now introduce a compact and computationally efficient representation of GHZ-basis states.

In systems composed of multiple GHZ-basis states, it is typical to work with block-diagonal stabilizer tableaux representing the states. All GHZ-preserving operations preserve this block-diagonal structure. Their action alters only the phases of stabilizer generators — i.e.,\ flipping signs in the phase column — while keeping the stabilizer generators fixed. This is equivalent to permuting the GHZ basis and never introducing off-diagonal terms or a superposition of different GHZ-basis states.

This motivates a compact representation: instead of tracking the full tableau or the underlying density matrix, we encode the system state as a binary string recording the sign (phase) of each stabilizer generator, where we adopt the convention that a $+$ sign is represented by 0 and a $-$ sign by 1. For a configuration of $m$ GHZ-basis states, each on $n$ qubits, the system can be completely described by an $n\times m$-bit string~(see Fig.~\ref{eq:ghz-tensor-stabilizer}).

Moreover, we note that each GHZ-preserving operation can be described as a permutation on the Cartesian product of \(m\) GHZ bases.
This is because the operations are GHZ preserving, i.e., each product of GHZ-basis states is mapped to another product of GHZ-basis states, and because the operations are unitary, i.e., the map must be invertible.

This implies that, in the bit-string representation introduced above, each GHZ-preserving operation is nothing more than a permutation of the $2^{nm}$ different bit strings.
This can be efficiently implemented by representing each bit string as an integer from 1 to $2^{nm}$; GHZ-preserving operations are then just permutations of those numbers (i.e., they are elements of the symmetric group $S(2^{nm})$).
Moreover, as each element of the H and B groups acts only on two GHZ states at a time, they are effectively only permutations between bit strings of length $2n$ and hence can be modeled more efficiently as permutations of the numbers 1 to $2^{2n}$.
Representing gates as such permutations results in a simulation complexity of \(O(1)\), i.e., a lookup in an ordered table encoding the permutation associated with the given gate. 

Although the permutation table for an $n$-qubit GHZ state has size $2^{2n}$, this does not cause a large overhead in practice: once $n$ is fixed, the table is generated only once and can be reused permanently. In terms of memory, the table stores only integer indices, which is far more compact than storing full stabilizer tableaux or density matrices; in fact the memory footprint is orders of magnitude smaller, and memory itself is cheap compared to runtime overhead.

\begin{figure*}[t] 
\begin{equation*}
|\phi^{+++}\rangle \otimes |\phi^{+--}\rangle \otimes |\phi^{-+-}\rangle \;\sim\;
\left[
\begin{array}{lll}
+ & XXX & \\
+ & Z\;Z\;I & \\
+ & I\;Z\;Z & \\
\\
+ & \phantom{XXX} & XXX \\
- & \phantom{XXX} & Z\;Z\;I \\
- & \phantom{XXX} & I\;Z\;Z \\
\\
- & \phantom{XXX} & \phantom{XXX} \quad XXX \\
+ & \phantom{XXX} & \phantom{XXX} \quad Z\;Z\;I \\
- & \phantom{XXX} & \phantom{XXX} \quad I\;Z\;Z \\
\end{array}
\right]
\quad \longrightarrow \quad 000\;011\;101
\end{equation*}
\caption{Example of ket notation, stabilizer tableau representation, and bitstring representation for the same state.}
\label{eq:ghz-tensor-stabilizer}
\end{figure*}

\section{GHZ Entanglement distillation and Circuit Optimization}
In this section, we provide an overview of existing GHZ-distillation protocols, discuss how noise can be modeled in quantum systems, and present our approach to optimizing distillation circuits under realistic error conditions. Our method leverages both a significant reduction of the search space—based on the decomposition framework developed in previous sections—and a drastic acceleration of circuit simulation, enabling efficient evaluation of candidate circuits under noise within the cost function of the optimizer. While in this particular section we discuss a genetic-algorithm optimizer, the choice of optimizer is not of importance in this work. Rather, the main achievement lies in the combination of search-space reduction and accelerated simulation, which together make the optimization of large circuits practically feasible.

\subsection{Overview of Entanglement Distillation for GHZ States}
Because GHZ states involve multiple qubits in a highly entangled configuration, they are especially susceptible to noise. Even small gate imperfections or external interference can degrade entanglement quality significantly, making distillation essential. 

Early entanglement-distillation protocols such as BBPSSW~\cite{bennett1996,deutsch1996} and double selection~\cite{fujii2009} rely on multiple CNOT operations and can be adapted to GHZ states via pairwise distillation or entanglement pumping~(Appendix~\ref{entanglement pumping}). Decisions on what errors to attempt to detect and how to do that are not trivial. In particular, $Z$ errors on different qubits have indistinguishable effects: they flip the relative phase. In contrast, $X$ errors lead to different states depending on which qubit it affected. This asymmetry makes it more difficult to design protocols that can simultaneously suppress both error types in GHZ systems~(Appendix~\ref{asymmetry}).

Some alternative approaches have been proposed for GHZ distillation~\cite{debond2020,glancy2006,rengaswamy2022,nickerson2013}, including protocols based on stabilizer codes or hybrid constructions with Bell pairs. While optimal under certain conditions, these methods often assume perfect gates and measurements, high-fidelity raw GHZ states/Bell pairs, or are optimal only in the asymptotic regime of infinitely large circuits, which are unrealistic in practical network settings. Moreover, the required overhead, such as a large numbers of ancilla qubits, scales poorly with system size.

In contrast, our work addresses a more realistic and challenging regime. We focus on noisy gates and modest initial fidelities — conditions that more accurately reflect current experimental platforms.  Our protocols can be optimized to tolerate arbitrary Pauli errors and achieve significant fidelity improvement without assuming idealized components. This shift in assumptions is crucial for enabling scalable GHZ-based architectures in large quantum networks.

\begin{figure}[htbp]
    \centering
    \includegraphics[width=0.45\textwidth]{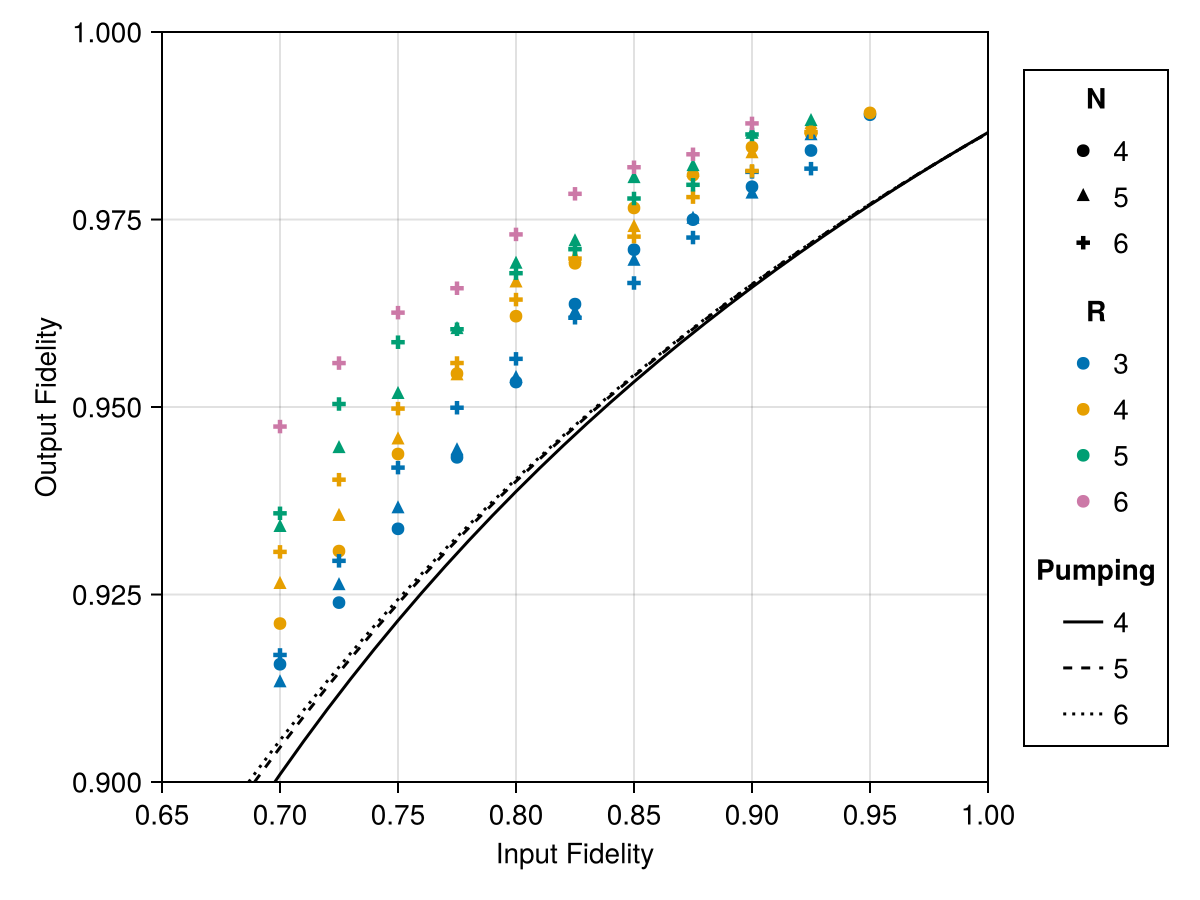}
    \includegraphics[width=0.45\textwidth]{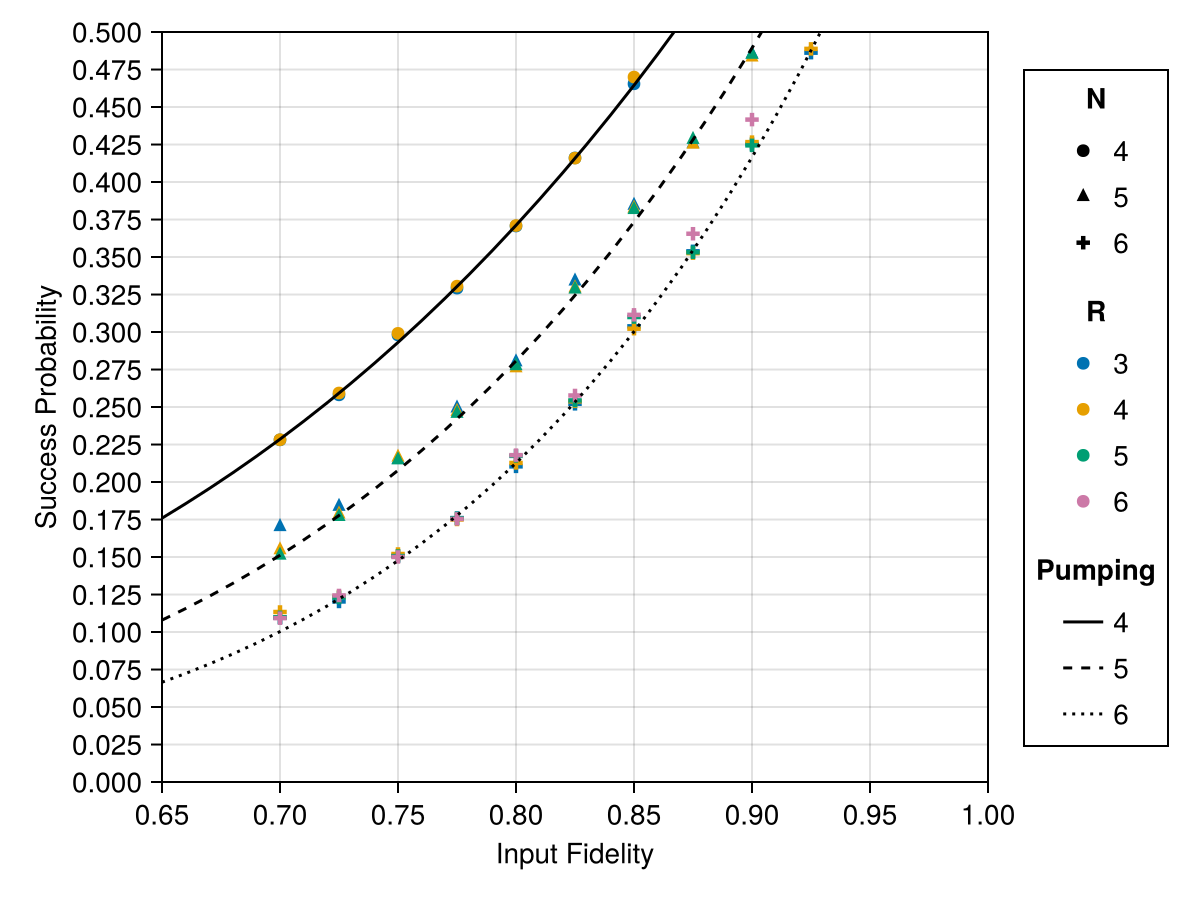}
    \caption{\textbf{Comparison between our optimized circuits and the recurrent pumping method.}
    Input fidelity vs output fidelity (top) and input fidelity vs success probability (bottom). 
    \textbf{N} denotes the number of raw GHZ states used by our circuits (encoded by marker shape); 
    \textbf{R} denotes the register size (how many qubits a node can store at a time, encoded by color). 
    Each point corresponds to a circuit obtained by our optimization under the given parameters.
    Solid, dashed, and dotted lines are the standard pumping baselines~(see Appendix~\ref{entanglement pumping}); the numbers listed under ``Pumping'' in the legend indicate the number of raw GHZ states used by the pumping protocol (solid: $N{=}4$, dashed: $N{=}5$, dotted: $N{=}6$). 
    All points are optimized to maximize the output fidelity given the lower bound of the pumping method’s success probability (gate error rate $p{=}0.01$, measurement error rate $\eta{=}0.01$). 
    Our circuits achieve higher output fidelity for any given success probability.}
    \label{fig:twocolumn}
\end{figure}
\begin{figure*}[htbp]
    \centering
    \begin{minipage}{0.45\textwidth}
        \includegraphics[width=\textwidth]{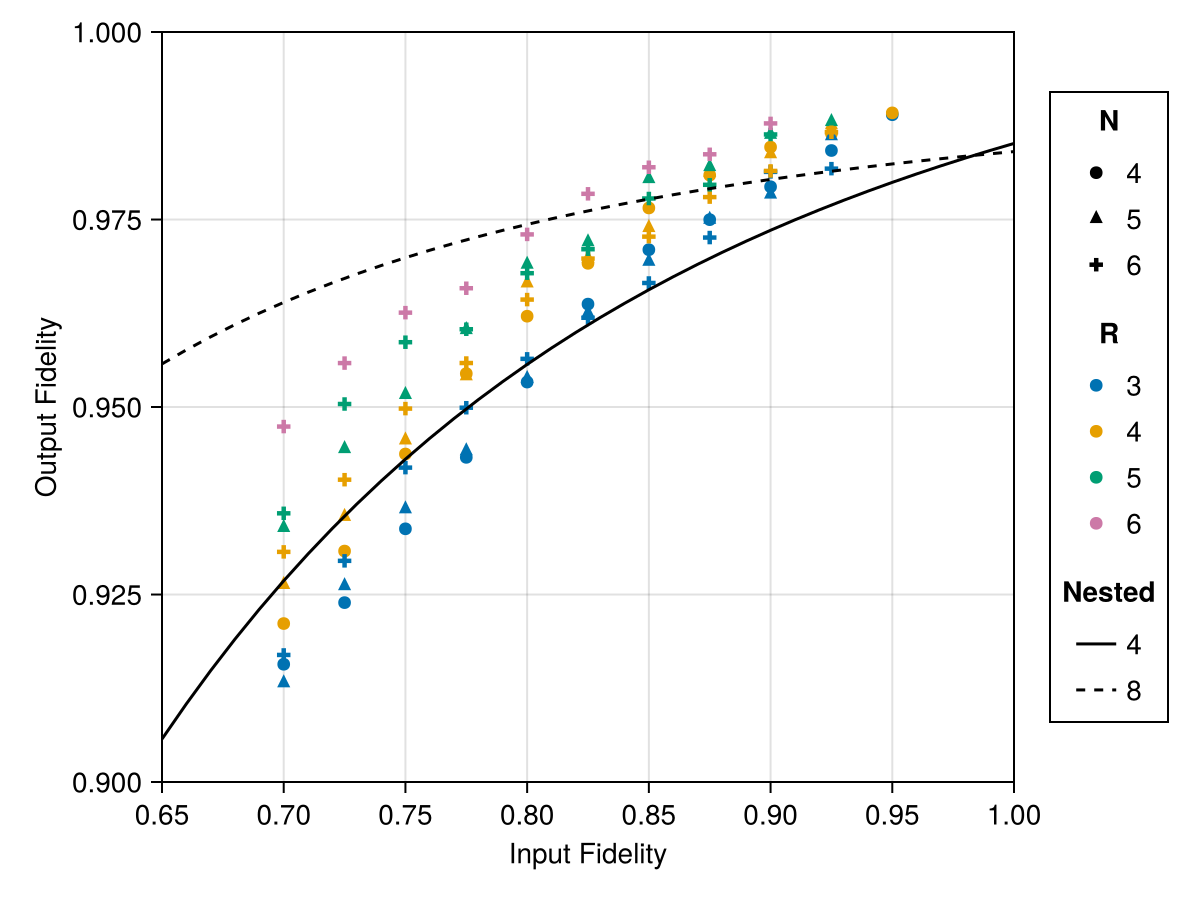}
        \vspace{1mm}
        \centering (a) Output fidelity
    \end{minipage}
    \hfill
    \begin{minipage}{0.45\textwidth}
        \includegraphics[width=\textwidth]{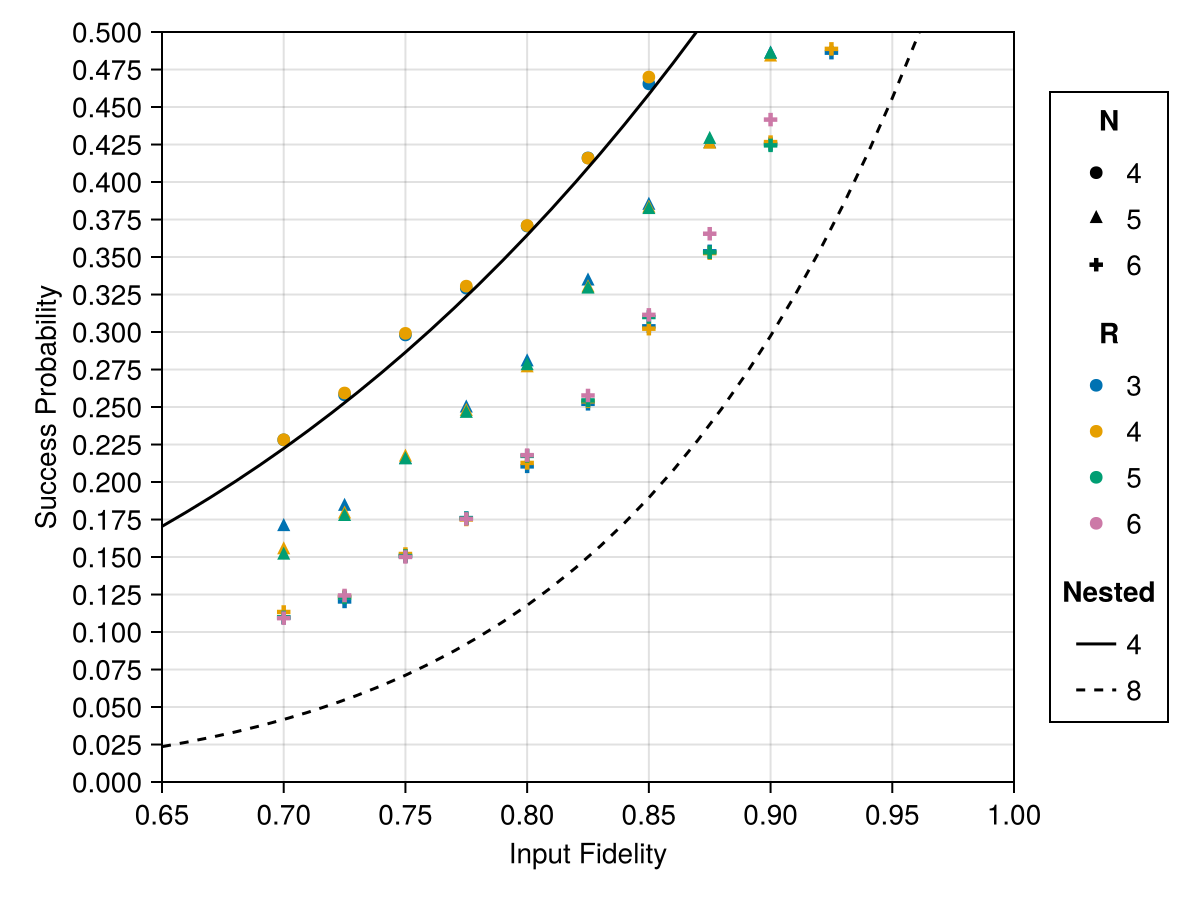}
        \vspace{1mm}
        \centering (b) Success probability
    \end{minipage}
    \caption{\textbf{Comparison between our optimized circuits and the nested distillation method.} 
    Input fidelity vs output fidelity (left) and input fidelity vs success probability (right), comparing against 2-round and 3-round nested protocols using 4 and 8 raw GHZ states, respectively. 
    Standard nested protocols typically assume twirling after each distillation round, where random Pauli gates are applied to symmetrize errors. This removes the strong $Z$-bias of the intermediate measurement outcomes and replaces it with an isotropic noise model (see Appendix~\ref{asymmetry}).
    In contrast, our circuits directly model and detect both $X$ and $Z$ errors without relying on twirling, and are optimized under realistic noise (gate error rate \(p=0.01\), measurement error rate \(\eta=0.01\)). As a result, we achieve higher output fidelities even with fewer raw GHZ states at moderate input fidelities. 
    Moreover, unlike nested protocols, which require exponentially growing and fixed numbers of raw states, our method supports arbitrary input counts, offering greater flexibility for circuit-level optimization. Both plots use the same data set as in Fig.~\ref{fig:twocolumn}, but are shown separately for clarity and to highlight different comparison baselines.}
    \label{fig:nest}
\end{figure*}

\subsection{Noise Modeling in GHZ States}
The raw input states are described as $n$-qubit GHZ states subject to isotropic noise
\begin{equation}
\rho_\text{in} \;=\; f_\text{in}\,\ket{\mathrm{GHZ}_n}\!\bra{\mathrm{GHZ}_n}
\;+\; (1-f_\text{in})\,\frac{I_n}{2^n}\,,
\end{equation}
where $f_\text{in}$ denotes the fidelity with respect to the ideal GHZ state and $I_n$ is the $n$-qubit identity operator.

In most examples below we use depolarizing noise, a standard and widely used error model equivalent to $X$, $Y$, and $Z$ Pauli errors with equal probability. Under depolarizing noise, the quantum state of a qubit undergoes the following transformation:
\begin{equation}
\rho \rightarrow (1 - p) \rho + \frac{p}{4} (\rho + X \rho X + Y \rho Y + Z \rho Z),   
\end{equation}
where \( \rho \) is the density matrix of the qubit, and \( X \), \( Y \), and \( Z \) represent the Pauli operators. This model captures the random application of bit-flip (\(X\)), phase-flip (\(Z\)), and combined (\(Y\)) errors, all of which can affect the qubit with equal probability.

We assume that every quantum gate in the circuit independently incurs depolarizing noise with probability \( p \), and each measurement will have probability $\eta$ to return an incorrect result. The goal of the circuit optimization described in the next subsection is to construct GHZ-preserving circuits that are robust to such noise and maximize the output fidelity of the final GHZ state.

Finally, we note that our framework naturally extends to more general biased Pauli noise models, in which $X$, $Y$, and $Z$ errors occur with unequal probabilities. These biased models include important physically motivated scenarios, such as amplitude damping (associated with $T_1$), which can be worst-case bounded by Pauli channels through Pauli twirling approximation~\cite{zhou2013}, while dephasing (associated with $T_2$) is already equivalent to a biased Pauli noise. All of these noise models are supported by our software implementation, allowing our approach to remain applicable in a broad range of experimental settings.

\subsection{Circuit Optimization for GHZ distillation}
Building on the group-decomposition framework introduced earlier, we leverage its \(O(1)\) complexity and reduced search space to efficiently search for and simulate a wide range of potential distillation circuits. In particular, the group decomposition and underlying group structure allow us to systematically construct GHZ-preserving circuits, greatly simplifying circuit generation, and enabling scalable optimization under realistic noise models. For the optimization we happen to use a genetic algorithm (the sequence of gates being the genome of an individual circuit in a population of circuits), but that choice is not of significance.

In the various evaluations below, we denote the circuit configuration by four parameters:
\begin{itemize}
    \item \( N \): The number of raw GHZ states used in the distillation process.
    \item \( n \): The number of qubits in each GHZ state.
    \item \( K \): The desired number of output GHZ states after a successful distillation.
    \item \( R \): The size of the register used in the distillation circuit. We specifically permit nodes of small size that can not contain all $N$ qubits at the same time, requiring regeneration of raw GHZ states during the execution of the purification circuit. 
\end{itemize}

These parameters, along with the gate error rates \( p \), measurement error rates \( \eta \), and raw GHZ fidelity $f_\textrm{in}$ define the constraints under which the optimization is performed. 
All of them can be adjusted to reflect different hardware architectures, allowing our approach to adapt to a variety of experimental conditions. For each such configuration, we simulate circuit performance and apply a genetic algorithm to explore the large space of possible designs. The optimization incorporates a simulated-annealing mechanism, where a generation-dependent temperature parameter controls the acceptance probability of suboptimal candidates. During early stages of the search the temperature is high, helping the algorithm avoid local optima. As the generations progress, the temperature decreases, making the selection more stringent. Each circuit is evaluated based on output fidelity, success probability, or other user-defined metrics, and the best-performing design is selected.

Our optimized circuits achieve the following improvements:  
(1) \textbf{Higher output fidelity at the same success probability.}  
Against the recurrent-pumping baseline~(see Appendix~\ref{entanglement pumping}), this advantage holds uniformly across all tested input fidelities (see Fig.~\ref{fig:twocolumn}).  
Against the nested baseline, our circuits outperform at moderate-to-high input fidelities (see Fig.~\ref{fig:nest}); at low input fidelities, the nested protocol appears to perform better. This is because it includes twirling, a probabilistic operation that is not included in our optimizations (see also Appendix.~\ref{asymmetry}).  
(2) \textbf{Fewer raw GHZ states required.}  
In both comparisons, our circuits can reach a target output fidelity with fewer raw GHZ states; for nested protocols this reduction is most pronounced in the moderate-to-high input-fidelity regime. To give a concrete sense of what such optimized circuits look like, we show in Fig.~\ref{fig:generated circuit} an example generated by our algorithm.

This optimization process offers a scalable and efficient approach to GHZ-state distillation, capable of adapting to various hardware setups and error rates. By efficiently exploring the space of possible circuit designs, we can significantly reduce the computational complexity and achieve better performance in terms of both resource efficiency and output fidelity. 

\begin{figure}[htbp]
    \centering
    \includegraphics[width=0.45\textwidth]{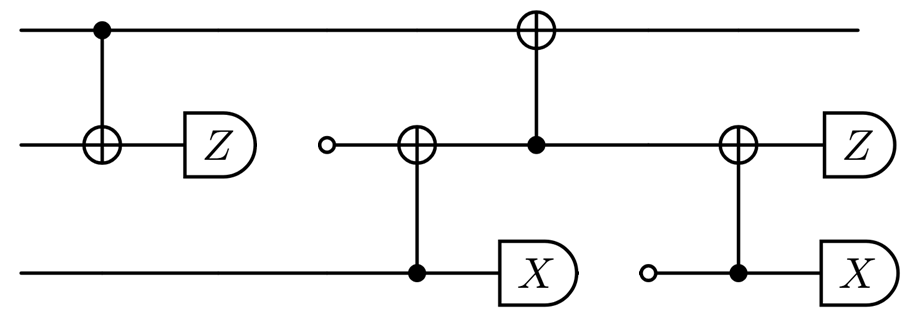}
    \caption{\textbf{Example of generated circuit.} This circuit is generated based on 5-to-1 3-qubit GHZ states, with constraint of register number $R=3$, gate error $p=0.01$, measurement error $\eta=0.01$, and input fidelity $f_{in}=0.9$. The small dot in the circuit means adding a new raw state after one is measured (register reuse). Notably, the figure only shows Alice's circuit, while Bob and Charlie apply the same operations on their respective registers. We deliberately selected this circuit for illustration because it only contains H-group gates, which ensures that all operations are applied homogeneously across all nodes. This homogeneity makes the circuit easier to visualize, as opposed to other generated circuits that may contain B-group gates, introducing bilocal operations between two nodes and making the diagram more complex to represent.}
    \label{fig:generated circuit}
\end{figure}

\section{Extension to Graph States}

While our decomposition framework is developed with GHZ states in mind, it naturally extends to a broader class of multipartite entangled states — namely, graph states. In future work we plan to explore the general case. In this section, we outline how our method can be applied directly to certain families of graph states without requiring substantial modification.

In particular, our approach is immediately applicable to star-shaped graph states, where a central qubit is entangled with all others via controlled-Z (CZ) gates. These states retain a symmetry structure that closely resembles GHZ states, and applying a single Hadamard gate to the central qubit of a star graph transforms it into a GHZ state, making them compatible with the same group decomposition framework. Moreover, the complete graph state can be transformed into a star-shaped graph through local complementation (LC) operations~\cite{hein2006}, allowing our methods to apply in this case as well. Examples of the necessary local adjustments are given in Table~\ref{tab:tran}.

A comprehensive treatment of arbitrary graph topologies and their associated preserving gate structures is beyond the scope of this paper and will be addressed in future work.

\begin{table}[t] 
\centering

A graph state:

\vspace{0.2cm}
\begin{tabular}{>{\centering\arraybackslash}p{0.6cm}
                >{\centering\arraybackslash}p{0.6cm}
                >{\centering\arraybackslash}p{0.6cm}}
\hline
X & Z & Z \\
Z & X & Z \\
Z & Z & X \\
\hline
\end{tabular}

\vspace{0.5cm}
Same graph state, different generating set of stabilizers:

\vspace{0.2cm}
\begin{tabular}{>{\centering\arraybackslash}p{0.6cm}
                >{\centering\arraybackslash}p{0.6cm}
                >{\centering\arraybackslash}p{0.6cm}}
\hline
X & X & X \\
Y & Y & I \\
I & Y & Y \\
\hline
\end{tabular}

\vspace{0.5cm}

The tableau after a change of basis gives us a GHZ state:

\vspace{0.2cm}
\begin{tabular}{>{\centering\arraybackslash}p{0.6cm}
                >{\centering\arraybackslash}p{0.6cm}
                >{\centering\arraybackslash}p{0.6cm}}
\hline
X & X & X \\
Z & Z & I \\
I & Z & Z \\
\hline
\end{tabular}

\vspace{0.5em}
\caption{\textbf{Local change of basis from a graph state to a GHZ state.} Standard stabilizer generators of a 3-qubit triangle graph state. \textbf{(b)} An alternative generator set for the same graph state. \textbf{(c)} Stabilizer generators for the 3-qubit GHZ state. The comparison between (a) and (b) demonstrates that different generator choices can represent the same graph state, while comparison between (b) and (c) showing a structural similarity between the graph and GHZ states. In particular, the generator \textbf{YYI} in (b) plays a similar role to \textbf{ZZI} in (c), suggesting that graph state operations can be viewed as GHZ transformations under a suitable basis adjustment.}
\label{tab:tran}

\end{table}

\section{Conclusion and Discussion}

In this work, we have introduced a complete framework for efficiently enumerating and simulating (in $O(1)$ time) GHZ-distillation operations, by decomposing local operations on pairs of GHZ states as products of operations from two small groups: the H group and the B group, defined in the main text. We have also extended this approach to certain graph states. This decomposition not only simplifies the design and simulation of GHZ-distillation circuits but also makes possible the creation of efficient tools for optimizing distillation circuits to bespoke hardware and noise models. We showcase how a simple optimizer using this drastically faster simulation technique in its cost function can easily produce distillation circuits much better than the state of the art. This is especially relevant for future quantum networks and distributed quantum computing, where high-fidelity entanglement and resource efficiency are paramount. Our approach has shown significant improvements in output fidelity, resource consumption, and tolerance to noise, making it a strong candidate for practical applications in real-world quantum networks and hardware.

Looking ahead, there are several promising avenues for future research. Much of this technique should be extendable to more general graph states, similarly to how early GHZ-purification techniques were extended to two-colorable graph states and then to arbitrary k-colorable graph states~\cite{aschauer2005,Kruszynska2006}, and we have early results that support this belief. More generally, this work can lead to improvements in measurement-based quantum computation (MBQC). Given the role of graph states in MBQC, applying our group decomposition framework to optimize graph state preparation and measurement protocols could lead to more efficient implementations of MBQC schemes. Another particularly interesting direction would be to incorporate these techniques into a dynamical model of entanglement generation, mimicking the behavior of a real quantum network rather than assuming on-demand, instantaneous access to raw states. Such an integration could reveal new trade-offs between fidelity, resource cost, and network timing constraints, and help bridge the gap between idealized protocols and practical distributed quantum architectures.

In conclusion, the group decomposition method we have introduced offers a scalable and efficient approach to simulating and optimizing quantum circuits for GHZ states and related graph states. Its flexibility and applicability to various quantum states make it a valuable tool for advancing quantum computation and communication. As quantum technologies continue to develop, we believe this method will play an increasingly important role in the design of robust, high-performance quantum networks and systems.

An open-source implementation of the $O(1)$ simulation technique described in this paper is available~\cite{githubghz}.

\begin{acknowledgments}
We are grateful to Kenneth Goodenough for valuable feedback. We acknowledge support from NSF grants  1941583, 2346089, 2402861, 2522101.
\end{acknowledgments}

%

\appendix

\section{Pauli and Clifford Groups}
The Pauli group $\mathcal{P}_n$ on $n$ qubits is generated by single-qubit Pauli operators $X_j=\begin{pmatrix}
  0&1 \\
  1&0
\end{pmatrix}$,
$Y_j=\begin{pmatrix}
  0&-i \\
  i&0
\end{pmatrix}$,
$Z_j=\begin{pmatrix}
  1&0 \\
  0&-1
\end{pmatrix}$ acting on the $j$th qubit, for $j = 1,\dots,n$.

Consider non-identity Pauli matrices $\mathcal{P}_n^*=\mathcal{P}_n \setminus {I^{\otimes n}}$
The Clifford Group $\mathcal{C}_n$ on $n$ qubits is 
\begin{equation}
\mathcal{C}_n=\{U \in U(2^n) | \sigma \in \pm \mathcal{P}_n^* \Rightarrow U \sigma U^\dagger \in \pm \mathcal{P}_n^*\}/U(1)
\end{equation}

The number of elements in $\mathcal{C}_n$ is
\begin{equation}
|\mathcal{C}_n|= \prod_{j=1}^{n}2(4^j-1)4^j=2^{n^2+2n}\prod_{j=1}^{n}(4^j-1)
\end{equation}

For any Clifford gate $U \in \mathcal{C}_n$, it is uniquely determined (up to a global phase) by how it maps each generator $X_i$ and $Z_i$ (where $i=1,2,\dots n$) through conjugation. This is because $X_i$ and $Z_i$ form a basis for the vector space of Pauli strings. When $U$ acts on the basis, it maps each $X$ and $Z$ such that the resulting images continue to satisfy the commutation and anti-commutation relations of the original operators~\cite{koenig2014, ozols2008, nebe2001}.

As an example, consider the Clifford group $\mathcal{C}_1$: $X$ can be mapped to any of $\pm \mathcal{P}_1^*$, but the image of $Z$ must anti-commute with the image of $X$ to preserve the structure of the Pauli algebra~\cite{ozols2008}.

We also introduce the "phaseless" Clifford group $\mathcal{C}^*_n=\mathcal{C}_n/\mathcal{P}_n$, the Clifford group on $n$ qubits modulo local Pauli operations $\mathcal{P}_n$, where $\mathcal{P}_n = \{\sigma_1 \otimes \cdots \otimes \sigma_n \mid \sigma_i \in \{I, X, Y, Z\}\}$. In other words $\mathcal{C}^*_n$ is a quotient group, i.e.\ each group element is itself a coset of multiple gates in $\mathcal{C}$, but all these gates have the same tableau entries and only differ by the phase column of the tableau (which is why we call it "phaseless"). The phaseless Clifford group $\mathcal{C}^*_n$ has further simple group form 
\begin{equation}
    \mathcal{C}^*_n \approxeq Sp(2n, \mathbb{F}_2) \equiv Sp(2n)
\end{equation}
~\cite{koenig2014}.

\section{Detailed Proofs}
\label{detailed_proofs}
Here we will introduce two sets of gates that act on the Hilbert space of two GHZ states: $H$ (for homogeneous) and $B$ (for bilocal) -- they are both GHZ-preserving and phaseless (as defined below). We introduce them because they are conceptually convenient for classifying the group of all GHZ-preserving gates, and we will prove that they, together with the Pauli gates, generate the entire group of GHZ-preserving gates. It should be noted that we focus specifically on single-qubit and two-qubit gates, as our circuit construction is based on these fundamental operations and any larger gate can be decomposed into them.

\subsection{Stabilizers of the GHZ State}

In the stabilizer formalism, a quantum state is described as the unique joint $+1$ eigenstate of a set of commuting Pauli operators, known as the stabilizer generators. For the $n$-qubit GHZ state
\begin{equation}
| \mathrm{GHZ}_n \rangle = \frac{1}{\sqrt{2}} \left( |0\rangle^{\otimes n} + |1\rangle^{\otimes n} \right),   
\end{equation}
the stabilizer group is generated by:
\begin{itemize}
    \item The $X$-type stabilizer: $X^{\otimes n}$
    \item $n - 1$ $Z$-type stabilizers: $Z_i Z_{i+1}$ for $i = 1$ to $n-1$
\end{itemize}

For instance, the stabilizer generators of the 3-qubit GHZ state are:
\begin{equation}
\langle XXX,\ ZZI,\ IZZ \rangle.    
\end{equation}

The full stabilizer group consists of all $2^n$ Pauli strings generated by these operators (including their products).

This structure plays a central role throughout the following proofs, as GHZ-preserving gates are defined by their action on these stabilizers (up to phase).

\subsection{Qubit Indices}

The qubit at node $i$ in GHZ state $j$ will have an index $k = (i-1)\times n+j$, where indexing starts at 1, and $n$ is the number of qubits in a single GHZ state. In other words, we first enumerate the first qubit of each GHZ state (by listing each node) and then go to the second qubit of each, and so on.

Thus, for the Hilbert space of two GHZ states, if we write $g \otimes g \otimes I^{(2n-4)}$ for a two-qubit gate $g$, this corresponds to node 1 acting on its two qubits (each being a part of a different GHZ state) with the gate $g$, and node 2 acting on its two qubits with the same gate, while everyone else does not act on their qubits.

\subsection{Definitions}
\label{sec:definitions}
We begin by clarifying several important terms used throughout this work:

\paragraph*{Number of qubits ($n$).}
We consider GHZ states of $n$ qubits distributed among $n$ nodes (e.g., Alice, Bob, Charlie, etc.), with each node holding exactly one qubit from each GHZ state.

\paragraph*{GHZ basis.}
Any state that is represented by the same stabilizer tableau (up to differences in signs, i.e.\ up to being different by local Pauli operators) is within what we call the \emph{GHZ basis}. Equivalently, these basis states can be generated from $\lvert \mathrm{GHZ} \rangle$ by applying local Pauli operations. There are $2^n$ such states in the GHZ basis for $n$ qubits.

\paragraph*{GHZ-preserving.}
An operation (or quantum gate) is called \emph{GHZ-preserving} if, for any input state from the GHZ basis, the output remains within the GHZ basis. Concretely, if the input is an $n$-qubit GHZ-basis state, the output must still be one of the $2^n$ GHZ-basis states.

\paragraph*{Bilocal.}
In a network of $n$ qubits (one per node), an operation is said to be \emph{bilocal} if it acts on exactly two qubits (for instance, one qubit held by Alice and one qubit held by Bob), while acting as the identity on all remaining qubits. In the case of a two-qubit gate (acting on qubits from two separate GHZ states), a bilocal gate is applied to a pair of qubits within each of two different nodes. Note that in this definition, we do not require Alice and Bob to apply the same gate on their respective qubits. That restriction will arise naturally in the discussion of the more specific B group of gates below.

\paragraph*{Phaseless.}
For much of our analysis, we work within the \emph{phaseless Clifford group}, denoted $C_2^* = C_2 / P_2$, the quotient group of the two-qubit Clifford group $C_2$ by the local Pauli group $P_2$. That is, two Clifford gates are considered equivalent in $C_2^*$ if they differ only by a local Pauli operation.
Each element of $C_2^*$ can be represented either as a coset of gates (i.e., $C_2$ gates that are equivalent up to a Pauli), or equivalently, as a single "phaseless" $C_2$ gate. In a given coset, all gates have the same stabilizer tableau representation, differing only in the phase column. The "phaseless" $C_2$ gate is the representative in the coset whose tableau has only positive phases. In the "phaseless" representation, the group operation is still standard gate multiplication, but phases of the stabilizer operators (the rows of the tableau) are not tracked and are assumed to be positive. This representation is useful for classifying all operations.


\medskip
With these definitions in place, we now proceed to examine the structure of the two "building block" groups that let us enumerate all GHZ-preserving two qubit gates (multi-qubit gates can be decomposed into two-qubit gates).

\subsection{The B Group} 
\label{Appendix:B}
In this subsection, we focus on bilocal gates, i.e.\ which act at exactly two nodes (here taken to be nodes 1 and~2 without loss of generality). Among all GHZ-preserving gates, we define the $B$ group as the subgroup consisting of those that are bilocal and phaseless. We will enumerate all such gates in two steps: first considering tensor products of single-qubit gates, and then examining two-qubit gates. We also show that the resulting group of bilocal, phaseless, GHZ-preserving operations is isomorphic to $\mathbb{Z}_2 \otimes \mathbb{Z}_2 \otimes \mathbb{Z}_2$.

Recall the definition of the $B$ group:
\newtheorem*{definition*}{Definition}
\begin{definition*}
\leavevmode
\begin{enumerate}
    \item \textbf{GHZ-preserving:} Each gate maps GHZ state to GHZ state.
    \item \textbf{Phaseless:} A gate applied at a given node belongs
 to the quotient group $\mathcal{C}^*_2$
    \item \textbf{Bilocal:} Only two nodes apply gates (not necessarily the same gate).
\end{enumerate}
\end{definition*}

\begin{lemma}[Single-qubit gates in the GHZ-preserving group]
    Only the Phase gate ($S$) can appear as a single-qubit operation in the GHZ-preserving group.
    \label{lemma:singlequbitB}
\end{lemma}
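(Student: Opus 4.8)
The plan is to work in the phaseless single-qubit Clifford group $\mathcal{C}^*_1 \cong S_3$, whose six elements are the permutations of the three Pauli axes $\{X,Y,Z\}$, and to show that the stabilizer structure of the GHZ state forces any admissible single-qubit factor to fix the $Z$-axis while sending the $X$-axis into $\{\pm X, \pm Y\}$. The only such permutations are the identity (the trivial Pauli coset) and the transposition $X \leftrightarrow Y$, which is exactly $S$, so $S$ is the only nontrivial single-qubit gate that can appear.

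First I would reduce to a single GHZ state. Since a tensor product of single-qubit gates acting on individual qubits splits across the two copies (each qubit belongs to exactly one GHZ state), the full operation is GHZ-preserving if and only if its restriction to each copy is. It therefore suffices to analyze single-qubit gates applied at the nodes of one $n$-qubit GHZ state. I would also record a clean description of the stabilizer group: the $Z$-type generators $Z_iZ_{i+1}$ span precisely the even-weight $Z$-strings, so every stabilizer element is either an even-weight $Z$-string or $X^{\otimes n}$ times such a string.

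Next I would extract the two key constraints by conjugating generators and asking which stabilizer elements can carry the resulting support. Conjugating a localized generator such as $Z_2 Z_3$ by a gate $g$ acting at one of its qubits leaves identity on all other qubits, and the only even-weight $Z$-string with that support is $Z_2 Z_3$ itself, which forces $g Z g^\dagger = \pm Z$. Conjugating $X^{\otimes n}$ leaves $X$ on every unaffected qubit; matching this against the only candidates with that support, namely $X^{\otimes n}$ and $X^{\otimes n} Z_1 Z_2 \propto Y_1 Y_2 X_3 \cdots X_n$, forces the acted images to lie in $\{\pm X\}$ or $\{\pm Y\}$, and in particular rules out $g X g^\dagger = \pm Z$. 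Combining the two constraints, $g$ fixes the $Z$-axis and maps the $X$-axis to $X$ or $Y$; modulo Pauli this leaves only $I$ and $S$.

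The main obstacle is the combinatorial bookkeeping in this last step: one must argue precisely that no stabilizer element other than the intended one realizes a given support pattern, and here the even-weight condition together with the all-or-nothing nature of the $X^{\otimes n}$ factor is exactly what makes the matching unique. A further subtlety worth flagging is that a single node acting alone already forces triviality (the $X^{\otimes n}$ constraint then collapses to $g X g^\dagger = \pm X$), so the nontrivial gate $S$ only survives when it is applied coherently at two nodes with matching image $X \to \pm Y$ — precisely the bilocal setting of the B group. Checking that $S \otimes S$ and its Pauli dressings genuinely preserve the GHZ basis then closes the argument, confirming that $S$ is actually realized rather than merely permitted.
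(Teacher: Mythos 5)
Your proposal is correct and follows essentially the same route as the paper's proof: both arguments use the restricted support of the $Z$-type stabilizer generators (an even-weight $Z$-string such as $Z_iZ_{i+1}$ can only map to itself under a gate that cannot create or destroy identity factors) to force $gZg^\dagger=\pm Z$, after which the phaseless single-qubit Clifford group leaves only the identity and the $X\leftrightarrow Y$ transposition, i.e.\ $S$. Your additional explicit treatment of the $X^{\otimes n}$ constraint and the remark that a lone $S$ at a single node is already excluded are correct refinements that the paper defers to its subsequent bilocal lemma, but they do not change the underlying argument.
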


\begin{table*}[t]
\centering
\begin{tabular}{|c|c|c|c|}
\hline
XI $\to$ + XI & XI $\to$ + XI & XI $\to$ + YI & XI $\to$ + YI \\
IX $\to$ + IX & IX $\to$ + IY & IX $\to$ + IX & IX $\to$ + IY \\
ZI $\to$ + ZI & ZI $\to$ + ZI & ZI $\to$ + ZI & ZI $\to$ + ZI\\
IZ $\to$ + IZ & IZ $\to$ + IZ & IZ $\to$ + IZ & IZ $\to$ + IZ\\
\hline
\multicolumn{1}{|p{3cm}|}{\centering $Identity$} & \multicolumn{1}{p{3cm}|}{\centering $I \otimes S$} & \multicolumn{1}{p{3cm}|}{\centering $S \otimes I$} & \multicolumn{1}{p{3cm}|}{\centering $S \otimes S$}

\end{tabular}
\begin{tabular}{|c|c|c|c|}
\hline
XI $\to$ + XZ & XI $\to$ + XZ & XI $\to$ + YZ & XI $\to$ + YZ \\
IX $\to$ + ZX & IX $\to$ + ZY & IX $\to$ + ZX & IX $\to$ + ZY \\
ZI $\to$ + ZI & ZI $\to$ + ZI & ZI $\to$ + ZI & ZI $\to$ + ZI \\
IZ $\to$ + IZ & IZ $\to$ + IZ & IZ $\to$ + IZ & IZ $\to$ + IZ \\
\hline
\multicolumn{1}{|p{3cm}|}{\centering $CZ$} & \multicolumn{1}{p{3cm}|}{\centering $CZ \cdot I \otimes S$} & \multicolumn{1}{p{3cm}|}{\centering $CZ \cdot S \otimes I$} & \multicolumn{1}{p{3cm}|}{\centering $CZ \cdot S \otimes S$} \\
\hline
\end{tabular}
\caption{Copy of Table~\ref{tab:Fgates}. B group gates and their corresponding mappings}
\label{tab:Fgates_sup}
\end{table*}

\begin{proof}
    Consider a three-qubit GHZ state with stabilizers \( XXX \), \( ZZI \), \( ZIZ \), \( IZZ \), \( YYX \), \( YXY \), \( XYY \), and \( III \). Since we work within the Clifford group (more precisely, the phaseless Clifford group), single-qubit gate can only map one of Pauli operators (\( X, Y, Z \)) to another Pauli operators. Importantly, any stabilizer that includes \( I \) (e.g., \( ZZI, ZIZ, IZZ \)) must be mapped to another stabilizer that also includes \( I \) in the same position. This restriction arises because a single-qubit gate cannot introduce or remove \( I \). 
    As a result, \( ZZI \) must be mapped to \( ZZI \), \( ZIZ \) to \( ZIZ \), and \( IZZ \) to \( IZZ \). This implies that \( Z \) must be mapped to \( Z \) on every qubit, and no other Pauli (such as \( X \) or \( Y \)) can appear in its place. 
    With Z required to map to Z, we are left with the transformations of X and Y. The potential mappings are either X→X, Y→Y (Identity) or X→Y, Y→X (Phase gate, $S$). Thus, when considering single qubit gates, Alice and Bob are not permitted to act with anything besides $S$ gates. The same argument holds for any $n$-qubit GHZ state, as its $Z$-type stabilizers are always of the form $Z_i Z_{i+1}$.
\end{proof}

\begin{lemma}[Bilocal requirement]
    The gates $S\otimes I \otimes S\otimes I \otimes I^{2n-4}$ and $I\otimes S \otimes I\otimes S \otimes I^{2n-4}$, where $S$ is the single-qubit phase gate, generate all phaseless GHZ-preserving gates that act bilocally on the first two nodes that are factorizable in single-qubit gates.
    \label{lemma:bilocal}
\end{lemma}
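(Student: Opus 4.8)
The plan is to combine the single-qubit restriction of Lemma~\ref{lemma:singlequbitB} with direct bookkeeping on the stabilizer generators. First I would observe that a bilocal gate on nodes $1$ and $2$ that factorizes into single-qubit gates is a tensor product $U_1 \otimes U_2 \otimes U_3 \otimes U_4$ on the four qubits held by these two nodes (two from each of the two GHZ copies), acting as identity elsewhere. By Lemma~\ref{lemma:singlequbitB}, every GHZ-preserving single-qubit factor is, phaselessly, either $I$ or $S$, so each candidate is specified by four bits recording at which of the four qubits $S$ is applied. This reduces the claim to deciding which of these $16$ tensor products are GHZ-preserving, i.e.\ which permute the GHZ basis (equivalently, map the joint stabilizer group to itself up to sign).

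Next I would track conjugation on the stabilizer generators of the two copies. Since $S$ fixes $Z$ phaselessly, every $Z$-type generator $Z_iZ_{i+1}$ is left invariant and imposes no constraint. The only generators that can move are the two $X$-type generators $X^{\otimes n}$, one per copy. Because $S$ sends $X \mapsto Y$ phaselessly, the image of a copy's $X$-generator is a Pauli string carrying a $Y$ exactly on the qubits of that copy where $S$ was applied, an $X$ on its remaining (untouched) qubits, and nothing on the other copy. The full gate is GHZ-preserving iff each such image again lies in the corresponding stabilizer group.

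The key step is the characterization of the $X$-type stabilizer elements. I would argue that the $n-1$ generators $Z_iZ_{i+1}$ are independent even-weight $Z$-strings, hence span exactly the $(n-1)$-dimensional space of all even-weight $Z$-strings; consequently the nontrivial-$X$ coset of the stabilizer group consists of $X^{\otimes n}$ times an even-weight $Z$-string, i.e.\ precisely those Pauli strings with $X$ or $Y$ on every qubit and an \emph{even} number of $Y$'s. Since the image above already has $X/Y$ on all $n$ qubits of its copy, it is a stabilizer element iff its $Y$-count is even, i.e.\ iff the number of $S$'s applied within that copy is even. This forces the $S$-assignment to agree on the two touched qubits of copy $A$ and, independently, on the two touched qubits of copy $B$; the two conditions decouple because the gate is a tensor product and the copies share no qubits.

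The two parity constraints leave exactly $4$ admissible gates, indexed by the independent choices "apply $S$ to copy $A$ at both nodes" and "apply $S$ to copy $B$ at both nodes", which are precisely $S\otimes I \otimes S\otimes I \otimes I^{2n-4}$ and $I\otimes S \otimes I\otimes S \otimes I^{2n-4}$; as $S^2 \equiv I$ phaselessly, these generate a group isomorphic to $\mathbb{Z}_2 \otimes \mathbb{Z}_2$ that contains every factorizable bilocal GHZ-preserving gate, giving the lemma. I expect the main obstacle to be the stabilizer-group characterization—pinning down that the $X$-type stabilizers are exactly the even-$Y$ strings (equivalently that $Z_iZ_{i+1}$ generate all even-weight $Z$-strings) and confirming the $X$-generator image cannot land in the group for any other reason—after which the argument is routine parity bookkeeping.
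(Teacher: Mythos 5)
Your proposal is correct and follows essentially the same route as the paper's proof: both arguments rest on the observation that conjugating the $X$-type stabilizer $X^{\otimes n}$ by $S$ gates places a $Y$ exactly where each $S$ acts, and that the resulting string is a GHZ stabilizer (an element of $X^{\otimes n}$ times an even-weight $Z$-string) if and only if the number of $Y$'s — hence of applied $S$ gates — within each copy is even, forcing the $S$ gates to come in pairs per copy. Your write-up is somewhat more explicit than the paper's in characterizing the even-$Y$ coset and enumerating the $16$ candidates down to the $\mathbb{Z}_2\otimes\mathbb{Z}_2$ group of four survivors, but the underlying parity argument is the same.
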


Note: above we specifically set the statement to apply to the nodes sharing two GHZ states, because in future lemmas we want to start considering two-qubit gates.

\begin{proof}
    Now we will further constrain how the $S$ gate is applied, showing that if Alice acts on qubit $i$ then Bob has to act on qubit $i+1$, i.e.\ if Alice acts on her qubit of a given GHZ state with the $S$ gate, then Bob has to act on his qubit of the same GHZ state.

    The GHZ state for three qubits includes the stabilizers (XXX, YYX, YXY, XYY). The Phase gate S affects these stabilizers by transforming X to Y and Y to X while keeping Z unchanged. This transformation permutes some stabilizers into each other.
    
    For an \( n \)-qubit GHZ state, there are \( 2^{n-1}-1 \)  Z-containing stabilizers (e.g., \( ZZI, ZIZ, IZZ \) for \( n=3 \)), each containing even number of \( Z \) operators. Correspondingly, there are \( 2^{n-1}-1 \) Y-containing stabilizers, each containing even number of Y. If Alice acts on a qubit of a GHZ state with the $S$ gate, then we will obtain a state whose set of stabilizers includes a stabilizer containing exactly one Y, which is not permitted. Thus Bob also has to act on his qubit of the corresponding GHZ state to remain in the GHZ basis (i.e.\ XXX $\to$ YYX). Conversely, as long as the $S$ gates appear in pairs, since every $Z$-containing stabilizer of the GHZ state contains two $Z$ operators, one can always find a corresponding stabilizer generator with an even number of $Y$ operators. Hence paired $S$ gates are sufficient to remain within the GHZ basis. See Fig.~\ref{circuit phase gate}.
\end{proof}

\begin{figure}[h]
    \centering
        \begin{quantikz}
        \lstick[2]{Alice} & \gate[1]{S} & \qw &  \\
        \lstick{} &&&  \\
        \lstick[2]{Bob} &  \gate[1]{S} &  \qw & \\
        \lstick{} &&&  \\
        \lstick[2]{Charlie} &  \qw &  \qw & \\
        \lstick{} &&& \\
        \end{quantikz}
    \caption{Bilocal Phase Gates}
    \label{circuit phase gate}
\end{figure}
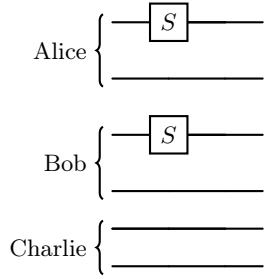

\begin{lemma}[Constraints on two-qubit gates in the $B$ group]
    Let $G = g_1 \otimes g_2 \otimes I^{\otimes 2n-4} \in B$, where $g_1$ and $g_2$ are two-qubit gates. Then both $g_1$ and $g_2$ must map a $Z$ entry of a GHZ stabilizer to a $Z$, and the only non-trivial such gate is the Controlled-$Z$ (CZ) gate, possibly composed with single-qubit phase ($S$) gates as described in Lemma~\ref{lemma:singlequbitB}.
    \label{lemma:czB}
\end{lemma}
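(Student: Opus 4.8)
The plan is to translate the GHZ-preserving requirement into a statement about conjugation of the joint stabilizer group and then peel off the action node by node. Label the two qubits at node $i$ by $a_i$ (from the first copy) and $b_i$ (from the second), and write a node-local two-qubit Pauli as $P_AP_B$ (so $ZI=Z_{a_i}$, $IZ=Z_{b_i}$, etc.), matching Table~\ref{tab:Fgates_sup}. The combined system is stabilized by $\mathcal{S}=\langle X_{a_1}\cdots X_{a_n},\,X_{b_1}\cdots X_{b_n},\,Z_{a_i}Z_{a_{i+1}},\,Z_{b_i}Z_{b_{i+1}}\rangle$, and, as argued for the bit-string representation, $G$ being GHZ-preserving is equivalent to conjugation by $G$ fixing $\mathcal{S}$ as an unsigned group. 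Since $G=g_1\otimes g_2$ is supported only on nodes $1$ and $2$, it must in particular fix the subgroup $\mathcal{S}_{12}$ of stabilizers supported on those two nodes; a short check (no global $X$-string can be confined to two nodes, so only $Z$-products survive) shows $\mathcal{S}_{12}=\langle Z_{a_1}Z_{a_2},\,Z_{b_1}Z_{b_2}\rangle$, whose nontrivial elements are $(ZI)(ZI)$, $(IZ)(IZ)$, $(ZZ)(ZZ)$ in (node $1$)(node $2$) notation.

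First I would establish the $Z\to Z$ claim. Conjugating $Z_{a_1}Z_{a_2}$ forces $\bigl(g_1(ZI),g_2(ZI)\bigr)$ to be one of the symmetric pairs above, so $g_1(ZI)=g_2(ZI)$ lies in the $Z$-plane $\langle ZI,IZ\rangle$, and likewise for $IZ$; hence each $g_i$ sends $Z$-type entries to $Z$-type entries. To sharpen this to $g_i(ZI)=ZI$ and $g_i(IZ)=IZ$ exactly, I would use a next-node stabilizer: for $n\ge 3$ the element $Z_{a_1}Z_{a_3}=(Z_{a_1}Z_{a_2})(Z_{a_2}Z_{a_3})$ has node-$2$ part equal to the identity and node-$3$ part $Z_{a_3}$, which $G$ leaves untouched. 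The only member of $\mathcal{S}$ supported on nodes $1,3$ with node-$3$ part exactly $Z_{a_3}$ is $Z_{a_1}Z_{a_3}$ itself, forcing $g_1(ZI)=ZI$, and symmetrically $g_2(ZI)=ZI$, $g_i(IZ)=IZ$.

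Next I would pin down the images of the $X$ entries using the two global $X$-stabilizers. Because $G$ acts as the identity on nodes $3,\dots,n$, conjugating $X_{a_1}\cdots X_{a_n}$ leaves the part on nodes $3,\dots,n$ unchanged, so the image must equal $X_{a_1}\cdots X_{a_n}$ times a single element $T\in\mathcal{S}_{12}$. Reading off the node-$1$ and node-$2$ parts gives $g_1(XI)=g_2(XI)=XI\cdot t$ with $t\in\{I,ZI,IZ,ZZ\}$, i.e. $g_i(XI)\in\{XI,YI,XZ,YZ\}$; moreover, the fact that the same $T$ governs both nodes already forces $g_1=g_2$, the "same gate at both nodes" property of the $B$ group. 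The stabilizer $X_{b_1}\cdots X_{b_n}$ gives $g_i(IX)\in\{IX,ZX,IY,ZY\}$ analogously.

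Finally I would impose that $g_i$ is a legitimate Clifford map: with $g_i(ZI)=ZI$ and $g_i(IZ)=IZ$ fixed, the only remaining freedom is the joint choice of $g_i(XI)$ and $g_i(IX)$, subject to the single surviving relation that they commute (since $XI$ and $IX$ commute). A direct parity count shows the two images commute precisely when $g_i(XI)$ carries a $Z$ on the $b$-qubit exactly when $g_i(IX)$ carries a $Z$ on the $a$-qubit, leaving exactly the eight maps $\mathrm{CZ}^{\,c}\cdot(S^{s_1}\otimes S^{s_2})$ with $c,s_1,s_2\in\{0,1\}$ — the four with $c=0$ being the single-qubit $S$ combinations of Lemma~\ref{lemma:singlequbitB} and the four with $c=1$ being CZ dressed by those same $S$ gates, exactly as tabulated. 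I expect the main obstacle to be the sharpening step that upgrades "$Z$-plane preserved" to "$Z$ fixed," since it is what genuinely uses $n\ge 3$ and rules out spurious solutions such as $g_i(ZI)=IZ$; the $n=2$ case is degenerate (a bilocal gate then touches every node and coincides with the homogeneous case) and should be excluded or handled separately.
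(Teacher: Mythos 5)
Your proof is correct, and while it starts from the same observation as the paper---that a gate supported only on nodes $1$ and $2$ must send the stabilizer $Z_{a_1}Z_{a_2}$ (which is identity elsewhere) to another stabilizer that is identity elsewhere, hence $Z$-type---it finishes the argument by a genuinely different route. The paper's proof, after establishing the six allowed per-node $Z$-mappings of Table~\ref{tab:allowZsup}, resolves the enumeration by an exhaustive check of the $36$ pairings and then simply asserts that the residual $X$-freedom is the $S$-gate freedom of Lemma~\ref{lemma:singlequbitB}. You instead pin down $g_i(ZI)=ZI$ analytically by conjugating the chain element $Z_{a_1}Z_{a_3}$, whose node-$3$ part is untouched by $G$ and uniquely identifies the image; you then derive the $X$-sector explicitly from the two global $X$-strings (which also hands you $g_1=g_2$, stated in the paper only as a later corollary) and close with a commutation-parity count showing exactly the eight maps $\mathrm{CZ}^{c}(S^{s_1}\otimes S^{s_2})$. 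This buys a self-contained derivation of the count of $8$ that the paper leaves implicit, at the cost of needing $n\ge 3$ for the third-node step---a restriction you correctly flag, and which the paper's argument also tacitly assumes (its stabilizer $ZZI\cdots I$ has no identity entries when $n=2$). The one loose end is that your argument is purely a necessity argument: to conclude that all eight surviving maps actually lie in $B$ you should note (as is immediate from your own setup) that each of them sends every generator of $\mathcal{S}$ to an element of $\mathcal{S}$, which the paper handles via Table~\ref{tab:Fgates_sup} and Theorem~\ref{thm:BH-subgroup}.
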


\begin{proof}
    (i) Consider one of the two GHZ states and one of its stabilizer operators, namely $ZZI\dots I$. The gate $G$ does not act on the qubits past the second one, thus all of the $I$ entries have to remain $I$. The only stabilizer operators fulfilling this constraint is $ZZI\dots I$, as any stabilizer generators contains X or Y have full weights, thus the gate $g$ has to map a $Z$ entry to a $Z$ entry (as a reminder, we are still consider only phaseless gates). Permitted gates of such type are exhaustively enumerated in Table~\ref{tab:allowZsup}. Of note, further down we will see that only a subset of these gates is actually in $B$, due to other constraints.
    
    (ii) Exhaustively checking the 36 pairings of the 6 entries in Table~\ref{tab:allowZsup}, shows us that only the first entry of the table preserves GHZ states when applied bilocally (i.e.\ by Alice and Bob at the same time). That entry corresponds to Z-mapping of the CZ gate (the X mapping is not constrained yet). The freedom on the X part of the tableau corresponds to the application of the single-qubit gates permitted by Lemma~\ref{lemma:singlequbitB}.
\end{proof}

\begin{figure}[h]
    \centering
        \begin{quantikz}
        \lstick[2]{Alice} & \gate[2]{CZ} & \qw &  \\
        \lstick{} &&&  \\
        \lstick[2]{Bob} &  \gate[2]{CZ} &  \qw & \\
        \lstick{} &&\gate[1]{S}&  \\
        \lstick[2]{Charlie} &  \qw &  \qw & \\
        \lstick{} && \gate[1]{S} & \\
        \end{quantikz}
    \caption{Example of two B group application, first is CZ gate at Alice and Bob, second is S gate at Bob and Charlie}
    \label{circuit CZ gate}
\end{figure}
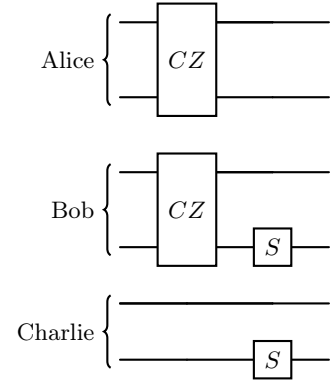

\begin{table*}[t]
\centering
\begin{tabular}{|c|c|c|c|c|c|}
\hline
ZI $\to$ + ZI & ZI $\to$ + IZ & ZI $\to$ + ZI & ZI $\to$ + ZZ & ZI $\to$ + IZ & ZI $\to$ + ZZ \\
IZ $\to$ + IZ & IZ $\to$ + ZI & IZ $\to$ + ZZ & IZ $\to$ + ZI & IZ $\to$ + ZZ & IZ $\to$ + IZ \\
\hline
\end{tabular}
\caption{Allowed Z mapping for GHZ-preserving. An underscore corresponds to the identity operator (typographic choice of ours to help with legibility).}
\label{tab:allowZsup}
\end{table*}

\begin{table*}[t]
\centering
\begin{tabular}{|c|c|c|c|c|c|}
\hline
XI $\to$ + XI & XI $\to$ + IX & XI $\to$ + XX & XI $\to$ + IX & XI $\to$ + XX & XI $\to$ + XI \\
IX $\to$ + IX & IX $\to$ + XI & IX $\to$ + IX & IX $\to$ + XX & IX $\to$ + XI & IX $\to$ + XX \\
ZI $\to$ + ZI & ZI $\to$ + IZ & ZI $\to$ + ZI & ZI $\to$ + ZZ & ZI $\to$ + IZ & ZI $\to$ + ZZ \\
IZ $\to$ + IZ & IZ $\to$ + ZI & IZ $\to$ + ZZ & IZ $\to$ + ZI & IZ $\to$ + ZZ & IZ $\to$ + IZ \\
\hline
\multicolumn{1}{|p{2cm}|}{\centering $Identity$} & \multicolumn{1}{p{2cm}|}{\centering $SWAP$} & \multicolumn{1}{p{2cm}|}{\centering $CNOT_{12}$} & \multicolumn{1}{p{2cm}|}{\centering $DCX_{21}$} & \multicolumn{1}{p{2cm}|}{\centering $DCX_{12}$} & \multicolumn{1}{p{2cm}|}{\centering $CNOT_{21}$} \\
\hline
\end{tabular}
\caption{Copy of Table~\ref{tab:Hgates}. H group gates and their corresponding mappings}
\label{tab:Hgates_sup}
\end{table*}

\begin{theorem}
    The $B$ group is generated by the phase gate (\(S\)) and the Control-Z (\(CZ\)) gate applied identically bilocally, e.g., the $B$ group on nodes 1 and 2 is
    \begin{align}
    B_{12} = \{\, & (g\cdot(f_1 \otimes f_2)) \otimes (g \cdot(f_1 \otimes f_2)) \otimes I^{2n-4} \mid \notag \\
            & g \in \{I, CZ\},\ (f_1, f_2) \in \{I, \mathit{S}\}^2 \,\}
    \end{align}

    \label{theorem:Bgroup}
\end{theorem}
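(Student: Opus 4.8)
The plan is to prove the set equality claimed in the statement by establishing both inclusions, with the preceding lemmas as the main tools. I would first record the governing principle: since every GHZ-basis state shares the same stabilizer generators up to sign, a phaseless gate is GHZ-preserving exactly when it normalizes the signless stabilizer group $\mathcal{S}=\langle X_A,\,X_B,\,Z_i^A Z_{i+1}^A,\,Z_i^B Z_{i+1}^B\rangle$ of the two copies, where $X_A=\prod_i X_i^A$ and superscripts $A,B$ label the two GHZ states sharing each node. A generic element of $B_{12}$ is $G=g_1\otimes g_2\otimes I^{2n-4}$ with $g_1,g_2\in\mathcal{C}_2^*$ acting on nodes $1$ and $2$, so the whole theorem reduces to deciding which pairs $(g_1,g_2)$ normalize $\mathcal{S}$.

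For the inclusion "$\supseteq$" (soundness) I would check that each of the eight listed gates normalizes $\mathcal{S}$. The paired phase gates do so by Lemma~\ref{lemma:bilocal}, while the matched $CZ$ sends $X_A\mapsto X_A\,Z_1^B Z_2^B$ and $X_B\mapsto X_B\,Z_1^A Z_2^A$ while fixing every $Z$-type generator; since $Z_1^B Z_2^B\in\mathcal{S}$, the image group is again $\mathcal{S}$. Because GHZ-preservation is closed under composition, all products also lie in $B_{12}$. I would then confirm the structure $\mathbb{Z}_2\otimes\mathbb{Z}_2\otimes\mathbb{Z}_2$: the three generators are diagonal, hence mutually commuting, and each squares to a Pauli ($S^2=Z$, $CZ^2=I$), i.e.\ to the identity of $\mathcal{C}_2^*$, giving exactly $8$ distinct elements.

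For the inclusion "$\subseteq$" (completeness) I would peel off the constraints in the order the lemmas supply them. Lemma~\ref{lemma:czB}(i) forces each of $g_1,g_2$ to send $Z$-entries to $Z$-entries, i.e.\ to fix the $Z$-Lagrangian $\langle Z_A,Z_B\rangle$ setwise, leaving the six candidates of Table~\ref{tab:allowZsup}; Lemma~\ref{lemma:czB}(ii) then eliminates all $6\times 6$ pairings except the one in which both $g_1$ and $g_2$ realize the trivial $Z$-mapping $Z_A\mapsto Z_A,\ Z_B\mapsto Z_B$. A short symplectic computation (each $X$-image must anticommute with its own $Z$, commute with the other, and the two $X$-images must commute) shows that the coset of phaseless Cliffords with this trivial $Z$-mapping has exactly eight members, namely $\{I,CZ\}\cdot\{I,S\}^{\otimes 2}$, recovering Table~\ref{tab:Fgates_sup}; so $g_1,g_2$ each range over these eight gates.

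It remains to force $g_1=g_2$, and I expect this matching step to be the crux. The idea is parity/weight bookkeeping across the two copies: any single-node deviation appends an odd-weight $Z$-string on the partner copy to an $X$-stabilizer, which cannot be completed to a legal (even-$Z$-weight) element of $\mathcal{S}$ except by the identical deviation at the second node. Concretely, an unpaired $S$ turns $X_A$ into a weight-one-$Y$ operator (excluded by Lemma~\ref{lemma:bilocal}), so the phase gates must match copy-by-copy; and $CZ$ at node $1$ alone sends $X_A\mapsto X_A\,Z_1^B$ with $Z_1^B\notin\mathcal{S}$, whereas applying it at both nodes appends the legal $Z_1^B Z_2^B$. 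Since after the previous step $CZ$ is the only coset element producing a cross-copy $Z$, its presence too must match. Hence $g_1=g_2=g\cdot(f_1\otimes f_2)$ with $g\in\{I,CZ\}$ and $(f_1,f_2)\in\{I,S\}^2$, which is precisely the claimed set, completing the equality.
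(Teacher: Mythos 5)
Your proof is correct and follows essentially the same route as the paper: the paper's own proof of this theorem is a one-line appeal to Lemmas~\ref{lemma:bilocal} and~\ref{lemma:czB} together with the commutativity of $S$ and $CZ$, and your argument reconstructs exactly the content of those lemmas (the trivial $Z$-mapping forced by the stabilizer structure, followed by the eight-element coset of admissible $X$-mappings). If anything, you are more explicit than the paper on the one genuinely delicate point---why the $CZ$ component must appear at \emph{both} nodes or at neither---which you settle via $X_A\mapsto X_A Z_1^B$ with $Z_1^B\notin\mathcal{S}$, a step the paper leaves largely implicit in part (ii) of Lemma~\ref{lemma:czB}.
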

\begin{proof}
    The theorem follows directly from Lemma~\ref{lemma:bilocal} and Lemma~\ref{lemma:czB}, and from the fact that the S gate and CZ gate commute.
\end{proof}

\begin{corollary}
All gates in the $B$ group need to be applied "identically" bilocally, i.e.\ the same gate is applied at both locations.
If a gate $G=g_1\otimes g_2 \otimes I^{\otimes 2n-4} \in B$, then $g_1=g_2$.
\end{corollary}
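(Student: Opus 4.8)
The plan is to obtain this corollary as an essentially immediate consequence of Theorem~\ref{theorem:Bgroup}; the only point requiring care is the well-definedness of the per-node factors $g_1$ and $g_2$. First I would observe that writing $G = g_1 \otimes g_2 \otimes I^{\otimes 2n-4}$ is unambiguous in the phaseless setting: each $g_i$ acts on the two qubits held by node $i$, and because we work in $\mathcal{C}_2^*$ (where a gate is fixed by its tableau action and global phases are discarded), the tensor factor at a given node is uniquely determined by how $G$ transforms the Pauli operators supported there. Any reciprocal-scalar ambiguity $g_1 \to \lambda g_1,\ g_2 \to \lambda^{-1} g_2$ is invisible in $\mathcal{C}_2^*$. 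Thus the question of whether $g_1 = g_2$ is well-posed.

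With this in hand, the result follows by reading off the explicit parametrization of the theorem. Since $G \in B$, Theorem~\ref{theorem:Bgroup} supplies $g \in \{I, CZ\}$ and $(f_1, f_2) \in \{I, S\}^2$ with
\begin{equation}
G = \bigl(g\cdot(f_1 \otimes f_2)\bigr) \otimes \bigl(g\cdot(f_1 \otimes f_2)\bigr) \otimes I^{\otimes 2n-4}.
\end{equation}
Comparing this with $G = g_1 \otimes g_2 \otimes I^{\otimes 2n-4}$ and using uniqueness of the factorization, both node factors equal the common gate $g\cdot(f_1\otimes f_2)$, whence $g_1 = g_2$.

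For readers who prefer to see the symmetry emerge from the building blocks rather than from the packaged theorem, I would trace it back to the feeding lemmas. The single-qubit content can, by Lemma~\ref{lemma:singlequbitB}, only be the $S$ gate, and Lemma~\ref{lemma:bilocal} forces it to appear in matched pairs across the two nodes (an $S$ applied by one node on a given GHZ copy requires the same on the other node). The exhaustive check in the proof of Lemma~\ref{lemma:czB} shows the entangling content must likewise be $CZ$ applied simultaneously at both nodes (or the identity at both). Since $S$ and $CZ$ exhaust the admissible ingredients and both are constrained to act symmetrically across the pair, the per-node two-qubit gates necessarily coincide.

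There is no substantial obstacle here, as the theorem already does the analytical work; the main subtlety is interpretive. One should emphasize that the equality $g_1 = g_2$ is an equality in $\mathcal{C}_2^*$, i.e.\ up to local Pauli operations and global phase, consistent with the phaseless framework in which $B$ is defined. Once uniqueness of the bilocal factorization is recorded, no further computation is needed.
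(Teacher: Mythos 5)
Your proposal is correct and follows the same route as the paper, whose entire proof is the one-line observation that the claim follows directly from Theorem~\ref{theorem:Bgroup}, since every element of $B$ is by that theorem of the form $(g\cdot(f_1\otimes f_2))\otimes(g\cdot(f_1\otimes f_2))\otimes I^{\otimes 2n-4}$ with identical node factors. Your added remarks on the well-definedness of the tensor factorization in $\mathcal{C}_2^*$ and the trace-back to Lemmas~\ref{lemma:singlequbitB}--\ref{lemma:czB} are sound elaborations but not a different argument.
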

\begin{proof}
    This directly follows from Theorem~\ref{theorem:Bgroup}.
\end{proof}

In other words, if a phaseless operation on two GHZ states is bilocal, it is guaranteed to be made of the repetition of the same two phaseless two-qubit gates.

\begin{theorem}
    For an \( n \)-qubit GHZ state, applying B group gates bilocally \( n-1 \) times across different pairs of qubits is sufficient to explore all possible transformations within the GHZ-preserving operations allowed by the $B$ group.
    \label{theorem:n-1}
\end{theorem}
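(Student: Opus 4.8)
The plan is to translate the statement into a linear-algebra fact over $\mathbb{F}_2$ by working in the bitstring representation of the two-copy system. Label the two GHZ copies $A$ and $B$, with node $i$ holding qubits $A_i,B_i$. A state in the joint GHZ basis is fixed by the signs of the standard generators, so I encode it as a vector in $\mathbb{F}_2^{2n}$ with coordinates $(x_A,z^A_1,\dots,z^A_{n-1})$ and $(x_B,z^B_1,\dots,z^B_{n-1})$, where $x_A$ is the sign exponent of $X_A^{\otimes n}$ and $z^A_k$ that of $Z_{A_k}Z_{A_{k+1}}$ (similarly for $B$). Since every B-group gate is GHZ-preserving and Clifford, it acts on this vector by an affine map; because we work in the phaseless group $\mathcal C_2^\ast$, the constant part (a local Pauli) is quotiented out and only the $\mathbb{F}_2$-linear part matters. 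The whole theorem then reduces to showing that the linear maps coming from B-gates on the $n-1$ adjacent pairs of nodes $(1,2),\dots,(n-1,n)$ already generate the linear maps coming from B-gates on every pair.

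Next I would compute these linear actions explicitly from the two generators of $B_{ij}$ in Theorem~\ref{theorem:Bgroup}. For the bilocal $CZ$ on nodes $i,j$, conjugation gives $X_A^{\otimes n}\mapsto X_A^{\otimes n}Z_{B_i}Z_{B_j}$ and $X_B^{\otimes n}\mapsto X_B^{\otimes n}Z_{A_i}Z_{A_j}$ while every $Z$-generator is fixed; writing $Z_{B_i}Z_{B_j}=\prod_{k=i}^{j-1}Z_{B_k}Z_{B_{k+1}}$ shows the action is $x_A\mapsto x_A\oplus\bigoplus_{k=i}^{j-1}z^B_k$ and $x_B\mapsto x_B\oplus\bigoplus_{k=i}^{j-1}z^A_k$. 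For the paired phase gate on copy $A$ one finds $X_A^{\otimes n}\mapsto -X_A^{\otimes n}Z_{A_i}Z_{A_j}$, i.e.\ $x_A\mapsto x_A\oplus 1\oplus\bigoplus_{k=i}^{j-1}z^A_k$; the extra $\oplus 1$ is exactly the effect of a single-qubit $Z$ on copy $A$ and hence vanishes in $\mathcal C_2^\ast$, leaving the linear map $x_A\mapsto x_A\oplus\bigoplus_{k=i}^{j-1}z^A_k$ (and symmetrically for copy $B$). The crucial structural observation is that all of these maps fix every $z$-coordinate and write only into $x_A$ or $x_B$, so they commute and their effect on the $x$-coordinates is additive in the index $k$.

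The additivity turns the generation argument into a telescoping identity: the contribution of a B-gate on a pair $(i,j)$ is $\bigoplus_{k=i}^{j-1}(\cdot)$, which is precisely the composition of the contributions of the adjacent-pair gates $(i,i+1),\dots,(j-1,j)$. Hence every $B_{ij}$ lies in the group generated by $B_{1,2},\dots,B_{n-1,n}$, so these $n-1$ adjacent applications generate all transformations reachable with B-gates on arbitrary pairs. Finally I would close the loop with a dimension count: choosing one of the $8$ elements of $B_{k,k+1}$ on each adjacent pair independently sets, for each $k$, the coefficient of $z^A_k$ in $x_A$ (copy-$A$ phase gate), the coefficient of $z^B_k$ in $x_B$ (copy-$B$ phase gate), and the coupled coefficient pair $(z^B_k\text{ in }x_A,\ z^A_k\text{ in }x_B)$ ($CZ$). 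These $3(n-1)$ bits are independent, so the assignment is a bijection onto a group of order $8^{n-1}$, matching the known count of phaseless B-operations and confirming that the $n-1$ applications are not only sufficient but non-redundant.

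The main obstacle I anticipate is bookkeeping the Clifford conjugation carefully, in particular the factors of $i$ and the residual global sign produced by $SXS^\dagger=Y$, and then justifying rigorously that this constant term is a genuine element of the local Pauli group so that it may legitimately be dropped in $\mathcal C_2^\ast$; everything afterward is linear algebra over $\mathbb{F}_2$. A secondary point to state cleanly is that repeated B-gates on a single pair collapse (since each $B_{ij}$ is itself a group), which is what lets ``across different pairs'' mean one application per pair.
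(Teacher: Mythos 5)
Your proposal is correct, and the underlying combinatorial idea coincides with the paper's: a B-group gate on a distant pair $(i,j)$ collapses onto the chain of adjacent pairs $(i,i+1),\dots,(j-1,j)$. The difference is in how this is established. The paper argues abstractly with group elements, invoking the self-inverse and commuting nature of $S$ and $CZ$, the relation $b_{ij}b_{jk}=b_{ik}$, and an informal ``absorb the extra gate by redefining the local operators along the chain'' step; it then defers the exactness of the $8^{n-1}$ count to a separate theorem. You instead make everything concrete in the $\mathbb{F}_2$ phase-vector representation: the telescoping identity $Z_{i}Z_{j}=\prod_{k=i}^{j-1}Z_{k}Z_{k+1}$ does the work of the paper's composition relation, the observation that all B-gates only write into the $x$-coordinates while fixing the $z$-coordinates gives commutativity and additivity for free, and the independence of the $3(n-1)$ bits yields the non-redundancy of the $8^{n-1}$ count within the same argument. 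Your version is arguably more rigorous where the paper is hand-wavy (the chain-reassignment step), at the cost of the sign bookkeeping you flag; that bookkeeping is indeed benign, since the residual $-1$ from $SXS^\dagger=Y$ applied at two sites is realized by conjugation with a single-qubit $Z$ (which flips only the $X^{\otimes n}$ phase), so it is a genuine local-Pauli coset representative and may be dropped in $\mathcal{C}_2^\ast$, exactly as the paper's decomposition reserves separate Pauli factors for such constants. One small point worth stating explicitly if you write this up: distinct induced actions on the phase vector imply distinct phaseless gates, which is the direction you need for the count, so identifying gates with their phase actions is harmless here.
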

\begin{proof}
    Recall that the $B$ group is generated by the two $S$ and $CZ$ gate, each satisfying $CZ^2 = I$ and $S^2 = I$~(up to phase), and furthermore these two gates commute with one another. As a result of this self-inverse nature, any sequence of bilocal applications can be simplified to at most \( n-1 \) applications, with each application involving distinct pairs of qubits. Let two \( n \)-qubit GHZ states be shared among \( n \) nodes \( 1, 2, \dots, n \), where each node holds one qubit from each state. A bilocal gate applied at nodes \( i \) and \( j \) can be denoted as \( b_{ij} \), where \( b_{ij} \) is an element of the set \( B_{ij} \), consisting of the eight possible two-qubit operations generated by \(S \otimes S\) and \(CZ\) on the pair \((i,j)\).
    The transformation of the entire system by a sequence of gates is represented as:
    \begin{equation}
    U = \prod_{(i, j) \in P} b_{ij},
    \end{equation}
    
    where \( P \) is the set of all pairs of nodes on which gates are applied.

    Due to the self-inverse nature, the application of gates on overlapping pairs of nodes satisfies the following composition property:
    \begin{equation}
    b_{ij} b_{jk} = b_{ik} \quad \text{(if \( b_{ij} = b_{jk} \))}.
    \end{equation}

    Using this composability property, any sequence of \( n-1 \) or more bilocal gates can be simplified by removing redundant applications. Specifically, if gates are applied to the pairs of nodes \( (1, 2), (2, 3), \dots, (n-1, n) \), then applying an additional gate to nodes \( (1, 3) \) is unnecessary, as its effect can be constructed by combining the transformations \( b_{12} \) and \( b_{23} \).

    Equivalently, one can describe the action of a sequence of $B$-group gates by assigning to each node a product of the bilocal operators it participates in. 
    For a chain of $n$ nodes, let the gate acting on node $1$ be denoted $b_1$, that on node $2$ be $b_1 b_2$, on node $3$ be $b_2 b_3$, and so on, so that node $j$ carries the operator $b_{j-1} b_j$ up to node $n$ which carries $b_n$. 
    Now suppose we insert an additional bilocal gate $b'$ acting on nodes $k$ and $l$ with $k<l$. 
    After this insertion, the operators on nodes $k$ and $l$ become $b_{k-1} b_k b'$ and $b_{l-1} b_l b'$, respectively. 
    Defining updated variables $b_k' = b_k b'$ and $b_{l-1}' = b_{l-1} b'$, the intermediate nodes $k+1, \ldots, l-1$ can be reassigned consistently with the original sequence, i.e.~the overall structure of the chain is preserved. 
    This shows that the effect of the extra gate $b'$ can be absorbed into a redefinition of the local operators on the path between $k$ and $l$, without requiring an additional independent bilocal application. 
    
    Thus, \( n-1 \) bilocal applications are both necessary and sufficient to explore all possible transformations within the GHZ-preserving operations allowed by the $B$ group.
\end{proof}

\subsection{The H Group}
\label{Appendix:H}
We now turn to the class of \emph{homogeneous} gates, where each of the $n$ nodes applies the \emph{same} two-qubit operation on its qubits (each of which belongs to a different GHZ state). We start with a group generated by the CNOT gate alone, and later on we will show that this is enough to generate any phaseless GHZ-preserving gate. We call this group the \emph{$H$ group}. We also show $H$ is isomorphic to the dihedral group $D_3$, the symmetry group of the equilateral triangle.

\newtheorem*{definition**}{Definition}
\begin{definition**}
\leavevmode
$H$ group is a group generated by \(\mathit{CNOT}_{12}^{\otimes n}\) and \(\mathit{CNOT}^{\otimes n}_{21}\)(up to local Pauli operations).
\end{definition**}

\begin{lemma}[Enumerating the six elements in $H$]
There are exactly six elements in $H$.
\end{lemma}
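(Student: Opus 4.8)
The plan is to reduce the count on $2n$ qubits to a count on a single node's pair of qubits, and then carry out that count in the symplectic (tableau) representation over $\mathbb{F}_2$. First I would exploit \emph{homogeneity}: because each generator is the same two-qubit gate applied at all $n$ nodes, the map $\phi: g \mapsto g^{\otimes n}$ sending a single-node phaseless two-qubit Clifford gate to its homogeneous $2n$-qubit version is a group homomorphism, since $(g_1^{\otimes n})(g_2^{\otimes n}) = (g_1 g_2)^{\otimes n}$. By definition $H = \phi(G_1)$, where $G_1 = \langle \mathrm{CNOT}_{12}, \mathrm{CNOT}_{21}\rangle \subseteq \mathcal{C}_2^*$. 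Restricting the conjugation action of $\phi(g)$ to the qubit pair at node $1$ recovers $g$ itself, so $\phi$ is injective; hence $H \cong G_1$ and it suffices to prove $|G_1| = 6$. The one subtlety to verify is that a phaseless $g^{\otimes n}$ is trivial only when $g$ is, which holds because the phaseless tableau of $g^{\otimes n}$ is block diagonal with $n$ identical copies of the tableau of $g$.

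To count $G_1$, I would write a two-qubit Pauli as $(x_1,x_2\mid z_1,z_2)$ over $\mathbb{F}_2$. Reading off Table~\ref{tab:Hgates_sup}, both $\mathrm{CNOT}_{12}$ and $\mathrm{CNOT}_{21}$ act \emph{block diagonally}, never mixing $X$-content with $Z$-content, with $X$-blocks equal to the two elementary transvections $\left(\begin{smallmatrix}1&0\\1&1\end{smallmatrix}\right)$ and $\left(\begin{smallmatrix}1&1\\0&1\end{smallmatrix}\right)$ and $Z$-blocks equal to their inverse-transposes. I would then note that block-diagonality and the relation ``$Z$-block $=$ inverse-transpose of $X$-block'' are both preserved under multiplication, so every element of $G_1$ is pinned down by its $X$-block alone. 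Since the two transvections generate all of $GL(2,\mathbb{F}_2)$, the assignment $g \mapsto (X\text{-block of }g)$ is an isomorphism $G_1 \cong GL(2,\mathbb{F}_2)$, giving $|G_1| = (2^2-1)(2^2-2) = 6$.

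As an independent cross-check — and because it is what Table~\ref{tab:Hgates_sup} already exhibits — I would explicitly enumerate the six words. Using $\mathrm{CNOT}_{12}^2 = \mathrm{CNOT}_{21}^2 = I$ (phaseless), the products $\mathrm{CNOT}_{12}\,\mathrm{CNOT}_{21}$ and $\mathrm{CNOT}_{21}\,\mathrm{CNOT}_{12}$ give the two $\mathrm{DCX}$ gates, the alternating triple $\mathrm{CNOT}_{12}\,\mathrm{CNOT}_{21}\,\mathrm{CNOT}_{12}$ gives $\mathrm{SWAP}$, and one verifies directly that the set $\{I,\mathrm{SWAP},\mathrm{CNOT}_{12},\mathrm{CNOT}_{21},\mathrm{DCX}_{12},\mathrm{DCX}_{21}\}$ is closed under multiplication, so no seventh element can appear.

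The main obstacle is not any single calculation but making the reduction step airtight: one must confirm that the homogeneous-to-single-node correspondence is a genuine isomorphism of \emph{phaseless} groups, so that elements neither collapse nor proliferate when passing between $n$ nodes and one node. The symplectic argument is what rigidly pins the count to exactly $6$, while the explicit table confirms the concrete identities of those six gates and sets up the subsequent identification of $H$ with the dihedral group $D_3$ (two involutions whose product has order $3$).
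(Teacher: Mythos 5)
Your proof is correct, but your primary argument takes a genuinely different route from the paper. The paper's proof is a direct enumeration: it simply lists the compositions of $\mathit{CNOT}_{12}$ and $\mathit{CNOT}_{21}$, checks closure under multiplication, and points to Table~\ref{tab:Hgates_sup} --- which is exactly what you relegate to your ``independent cross-check.'' Your main argument instead proceeds structurally: first the reduction from $2n$ qubits to a single node via the injective homomorphism $g \mapsto g^{\otimes n}$ (with the block-diagonal-tableau observation correctly handling the phaseless subtlety), and then the identification of $\langle \mathit{CNOT}_{12}, \mathit{CNOT}_{21}\rangle$ with $GL(2,\mathbb{F}_2)$ via the $X$-block of the symplectic representation, using that both generators are block diagonal with $Z$-block equal to the inverse transpose of the $X$-block and that the two elementary transvections generate $GL(2,\mathbb{F}_2)$. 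What your approach buys is an a priori upper bound --- the count of $6 = (2^2-1)(2^2-2)$ falls out of the group theory rather than from checking that no seventh word appears --- and it makes the later isomorphism $H \cong \mathbb{D}_3$ immediate, since $GL(2,\mathbb{F}_2) \cong S_3 \cong \mathbb{D}_3$. What the paper's approach buys is concreteness: the explicit table of the six gates and their stabilizer mappings is what is actually consumed by the subsequent counting and completeness theorems, so the enumeration is not wasted work. Both are sound; yours is the more portable argument if one wanted to generalize the homogeneous group to other generating gates.
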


\begin{proof}
First, observe that each node applying $\mathit{CNOT}_{12}$ (or $\mathit{CNOT}_{21}$) indeed maps a GHZ-basis state to a GHZ-basis state. 
We enumerate all possible compositions and verify that they form a closed set under multiplication. The group consists of exactly six elements. See Table~\ref{tab:Hgates_sup}.
\end{proof} 

\begin{figure}[h]
        \begin{quantikz}
        \lstick[2]{Alice} & \ctrl{1} &  &  &  & \\
        \lstick{} &\targ{}&& \qw & \qw & \\
        \lstick[2]{Bob} & \ctrl{1} &  &  &  & \\
        \lstick{} &\targ{}&& \qw & \qw & \\
        \lstick[2]{Charlie} & \ctrl{1} &  &  & & \\
        \lstick{} &\targ{}&& \qw & \qw & \\
        \end{quantikz}
        \caption{Example of H group application, tensor product of $CNOT_{12}$}
        \label{circuits_H}
\end{figure}
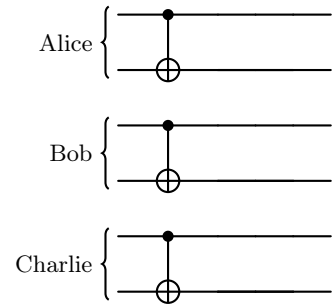

\subsection{Completeness of B and H}

Let $\mathcal{G}_n$ be the set of all phaseless local GHZ-preserving gates acting on two $n$-qubit GHZ states (distributed among $n$ nodes).  

\begin{theorem}[The $B$ and $H$ groups generate a subgroup of $\mathcal{G}_n$ of size $6\,\times\,8^{\,n-1}$]
\label{thm:BH-subgroup}

Consider the group built as a product of one $H$ gate and $(n-1)$ $B$ gates, namely
\begin{equation}
    \langle B, H\rangle
\;=\;
\biggl\{\,
  h \,\prod_{i=1}^{n-1} b_{i,i+1}
  \;\bigg|\;
  h \in \mathrm{H},~
  b_{ij}\in \mathrm{B_{ij}}
\biggr\}.
\end{equation}

Then:

\begin{enumerate}
\item All these gates are GHZ-preserving,~i.e.$\langle B,H\rangle \subseteq \mathcal{G}_n$

\item The total number of distinct (phaseless) gates in $\langle B,H\rangle$ is
    \begin{equation}
    6 \;\times\; 8^{\,n-1}.    
    \end{equation}

\end{enumerate}
\end{theorem}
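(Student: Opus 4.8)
The plan is to treat the two claims separately, since the first is essentially immediate while the second carries all the real content. For claim (1), I would observe that each $h\in H$ and each $b_{i,i+1}\in B_{i,i+1}$ is individually GHZ-preserving (established in the preceding subsections) and that the GHZ-preserving property is closed under composition: each such gate permutes the set of product GHZ-basis states, and a composition of permutations of that set is again such a permutation. Hence every product $h\prod_i b_{i,i+1}$ maps products of GHZ-basis states to products of GHZ-basis states, and since it is built entirely from phaseless, locally-supported building blocks it lies in $\mathcal{G}_n$; thus $\langle B,H\rangle\subseteq\mathcal{G}_n$.

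For claim (2), the upper bound $6\times 8^{n-1}$ is trivial (six choices for $h$, eight for each of the $n-1$ factors $b_{i,i+1}$), so the whole difficulty is showing that the parametrization map $(h,b_{1,2},\dots,b_{n-1,n})\mapsto h\prod_i b_{i,i+1}$ is \emph{injective} as a map into phaseless gates, i.e.\ that no two distinct tuples collide. I would prove this by recovering the factors from the tableau of the product. The crucial observation is that both generators of the $B$ group ($CZ$ and $S$) are diagonal in the computational basis and hence commute with every single-qubit $Z$ operator; consequently each $b_{i,i+1}$ fixes the entire $Z$-part of the tableau. Therefore the $Z$-columns of $g=h\prod_i b_{i,i+1}$ coincide with those of $h$ alone. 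Since the six elements of $H$ act on the node-local pair $(Z\otimes I,\,I\otimes Z)$ in six pairwise-distinct ways (read directly off Table~\ref{tab:Hgates_sup}), the homogeneous factor $h$ is uniquely recovered, after which $\prod_i b_{i,i+1}=h^{-1}g$ is fixed.

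The remaining, and main, obstacle is injectivity of the $B$-factor product itself, which is nontrivial precisely because consecutive factors $b_{i,i+1}$ and $b_{i+1,i+2}$ share node $i+1$, so their supports overlap and one cannot simply read off each factor node-by-node. I would parametrize $b_{i,i+1}$ by a triple $(c_i,s_i^{(1)},s_i^{(2)})\in\mathbb{F}_2^{3}$ recording whether $CZ$, $S$ on the first qubit, and $S$ on the second qubit are applied at that pair, so that the eight phaseless single-node operations of Table~\ref{tab:Fgates_sup} are distinct. Using $CZ^2=S^2=I$ (up to phase) and mutual commutativity, the accumulated operation at node $j$ is, for each of the three independent types, governed by the $\mathbb{F}_2$-sum of the parameters of the factors incident to $j$; this yields the same linear map $\mathbb{F}_2^{n-1}\to\mathbb{F}_2^{n}$, namely $(x_1,\dots,x_{n-1})\mapsto(x_1,\,x_1+x_2,\,\dots,\,x_{n-2}+x_{n-1},\,x_{n-1})$, for all three types. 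This map is injective: its first coordinate fixes $x_1$, and then $x_2,\dots,x_{n-1}$ follow inductively. Recovering all three types independently reconstructs the full $3(n-1)$-bit input, so the $B$-product ranges over exactly $8^{n-1}$ distinct phaseless gates; combined with the six recoverable choices of $h$, this gives the count $6\times 8^{n-1}$. I expect this overlap-induced triangular linear system to be the only subtle point, with everything else amounting to bookkeeping on the Clifford tableau.
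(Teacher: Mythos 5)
Your proof is correct, and while part (1) and the trivial upper bound in part (2) coincide with the paper's argument, your mechanism for establishing distinctness is genuinely different and, in fact, more complete. The paper argues by contradiction: a collision would force $h_2^{-1}h_1=\bigl(\prod_i b^{(2)}_{i,i+1}\bigr)\bigl(\prod_i b^{(1)}_{i,i+1}\bigr)^{-1}$ to lie in $H\cap B=\{I\}$, from which it concludes $h_1=h_2$ and then simply asserts $b^{(1)}_{i,i+1}=b^{(2)}_{i,i+1}$ for all $i$; that last step is exactly the injectivity of the $B$-product factorization, which the paper never proves (its Theorem~\ref{theorem:n-1} shows $n-1$ bilocal gates \emph{suffice}, not that distinct chains of them yield distinct products). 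You instead recover the factors constructively: since $CZ$ and $S$ are diagonal they fix every $Z$-type Pauli under conjugation, so the $Z$-columns of the product equal those of $h$ alone, and the six $Z$-actions in Table~\ref{tab:Hgates_sup} are pairwise distinct, pinning down $h$; then the overlap of consecutive supports reduces to the triangular $\mathbb{F}_2$-linear system $(x_1,\dots,x_{n-1})\mapsto(x_1,x_1+x_2,\dots,x_{n-1})$, whose injectivity (together with the distinctness of the eight node-local gates in Table~\ref{tab:Fgates_sup}) recovers each $b_{i,i+1}$. Your route buys an explicit reconstruction algorithm and closes the gap the paper leaves open; the paper's route is shorter but leans on the unproven factor-wise cancellation and on the somewhat informal claim that mismatched tableaux "cannot be undone by a local Pauli."
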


\begin{proof}
\noindent\textbf{1. GHZ-preserving nature.}
From Sections~\ref{Appendix:H} and~\ref{Appendix:B}, we know that each $h\in H$ individually maps any GHZ-basis state to another GHZ-basis state. Similarly, each $b_{ij}\in B_{ij}$ is also GHZ-preserving. Since composition of GHZ-preserving operations remains GHZ-preserving, it follows that any product of element in $B$ and $H$ group is GHZ-preserving, yielding $\langle B,H\rangle \subseteq \mathcal{G}_n$.

\medskip
\noindent
\textbf{2. Enumerating the group size.}
By construction, there are exactly 6 elements in $H$ (see Table~\ref{tab:Hgates}), and 8 elements in each $B_{ij}$ (see Table~\ref{tab:Fgates}), all considered up to local Pauli equivalences (phaseless). To build a gate in $\langle B,H\rangle$, one picks:
\begin{itemize}
    \item exactly one homogeneous gate $h \in H$ to apply at all $n$ nodes,
    \item up to $(n-1)$ bilocal gates $b_{i,i+1}\in B_{i,i+1}$ on distinct node pairs $(i,i+1)$.
\end{itemize}
Hence we have $6$ ways to choose $h$ and $8^{\,n-1}$ ways to choose the bilocal gates in total, leading to
\begin{equation}
    |\langle B,H\rangle| \;=\; 6 \,\times\, 8^{\,n-1}.
\end{equation}

\medskip
\noindent
\textbf{Distinctness (no double-counting).}
We must verify that different choices of $h$ and $b_{i,i+1}$ cannot yield the same operation \emph{even up to local Pauli phases}. In other words, we show there is no collision of the form
\begin{equation}
h^{(1)} \!\prod_{i=1}^{n-1} b_{i,i+1}^{(1)}
\;\;=\;
P\;
h^{(2)} \!\prod_{i=1}^{n-1} b_{i,i+1}^{(2)}
\end{equation}

for some $h^{(1)},h^{(2)}\in H$, $b_{i,i+1}^{(1)}, b_{i,i+1}^{(2)}\in B_{i,i+1}$, and \emph{local} Pauli $P$. 
This follows from the canonical "$Z/X$-mapping", where each phaseless gate has a unique stabilizer action on each node. 
Since $H$ has 6 distinct (phaseless) ways to act homogeneously across all nodes, and each $B_{i,i+1}$ has precisely 8 distinct ways to act bilocally between nodes $i$ and $i+1$, any mismatch in $(h,b_{i,i+1})$ choices leads to a difference in the induced stabilizer mapping that cannot be "undone" by a local Pauli. In particular, local Paulis at one node do not affect the action on a different node, so there is no way to identify two \emph{globally} different assignments by a single $P$ factor. 

Suppose for contradiction that two different assignments produce the same operation (up to local Pauli),
\begin{equation}
h_1 \!\prod_{i=1}^{n-1} b_{i,i+1}^{(1)}
\;=\;
h_2 \!\prod_{i=1}^{n-1} b_{i,i+1}^{(2)} ,
\end{equation}
with $h_1,h_2\in H$, $b_{i,i+1}^{(1)},b_{i,i+1}^{(2)}\in B_{i,i+1}$.

Rearranging gives
\begin{equation}
h_2^{-1} h_1
\;=\;
\Bigl(\prod_{i=1}^{n-1} b_{i,i+1}^{(2)}\Bigr)
\;
\Bigl(\prod_{i=1}^{n-1} b_{i,i+1}^{(1)}\Bigr)^{-1}.
\end{equation}
The left-hand side belongs to $H$, while the right-hand side belongs to $B$ (since $B$ is closed under multiplication and contains all local Paulis up to phaseless equivalence). Thus we have an element simultaneously in $H$ and $B$.

But by construction $H\cap B = \{I\}$ in the phaseless setting: an operator cannot both in H group and B group, unless it is the identity. Hence the only possible equality is the trivial one with $h_1=h_2$ and $b_{i,i+1}^{(1)}=b_{i,i+1}^{(2)}$.

Therefore no two distinct assignments collapse to the same phaseless gate, and the count
\[
|\langle B,H\rangle| = 6 \times 8^{\,n-1}
\]
is exact.
Consequently, each choice of $\{h,b_{i,i+1}\}$ yields a genuinely distinct phaseless operation in $\langle B,H\rangle$.
\end{proof}

\begin{theorem}[Counting of GHZ-Preserving Gates]
\label{thm:ghz_preserving_enumeration}
Recall the $\mathcal{G}_n$ is the set of all phaseless Clifford unitaries (i.e., modulo local Pauli equivalences) acting on \emph{two} $n$-qubit GHZ-basis states, such that every GHZ-basis state is mapped to another GHZ-basis state.  Then
\begin{equation}
    \bigl|\mathcal{G}_n\bigr| \;\leq\; 6 \,\times\, 8^{\,n-1}.
\end{equation}

\end{theorem}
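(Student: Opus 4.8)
The plan is to prove the matching upper bound by showing that every element of $\mathcal{G}_n$ already lies in $\langle B,H\rangle$. Combined with Theorem~\ref{thm:BH-subgroup}, which gives $\langle B,H\rangle \subseteq \mathcal{G}_n$, $|\langle B,H\rangle| = 6\times 8^{n-1}$, and a unique decomposition of each of its elements, this forces $\mathcal{G}_n = \langle B,H\rangle$ and hence the stated bound. Since the gates are local, I would write a generic $U \in \mathcal{G}_n$ as $U = \bigotimes_{i=1}^n U_i$, where $U_i$ is a phaseless two-qubit Clifford acting on the two qubits at node $i$; denote these qubits $a_i$ (from copy~1) and $b_i$ (from copy~2). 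The argument is then a bookkeeping of how the fixed stabilizer generators — the two weight-$n$ operators $X^{(1)}=\prod_i X_{a_i}$ and $X^{(2)}=\prod_i X_{b_i}$, together with the $Z$-type generators $Z_{a_i}Z_{a_{i+1}}$ and $Z_{b_i}Z_{b_{i+1}}$ — are allowed to transform.

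First I would pin down the $Z$-part of $U$. By the reasoning of Lemma~\ref{lemma:czB}, GHZ-preservation forces each $U_i$ to map $Z$-type Paulis to $Z$-type Paulis, so on node $i$ the images of $Z_{a_i}$ and $Z_{b_i}$ lie in $\{Z_{a_i}, Z_{b_i}, Z_{a_i}Z_{b_i}\}$. The crucial step is to show this $Z$-action is homogeneous: conjugating a stabilizer $Z_{a_i}Z_{a_{i+1}}$ and demanding the image again lie in $\mathcal{S}$ shows that the images at nodes $i$ and $i+1$ must carry identical copy-1 and copy-2 weights, which forces the local $Z$-map to be the same at every node. The six possible homogeneous $Z$-maps are exactly the elements of $GL(2,\mathbb{F}_2)$, and these are realized precisely by the six elements of $H$ (Table~\ref{tab:Hgates}); thus the $Z$-part of $U$ selects a unique $h \in H$, contributing the factor of $6$.

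Next I would count the residual freedom after the $Z$-part is fixed, i.e.\ in $U' = U h^{-1}$, which is still local, phaseless and GHZ-preserving but now acts as the identity on every $Z$ generator. Since the $B$-group elements are exactly the gates that fix the $Z$-part (all rows $ZI\to ZI$, $IZ\to IZ$ in Table~\ref{tab:Fgates}), I expect $U'$ to be built from the $S$ and CZ decorations of the $X$-part. Concretely, each $U'_i$ is specified by a pair of $Z$-type decorations $(z_a^{(i)}, z_b^{(i)})$ via $X_{a_i}\mapsto X_{a_i} z_a^{(i)}$ and $X_{b_i}\mapsto X_{b_i} z_b^{(i)}$; the commutation (symplectic) constraint at a single node leaves $8$ admissible decorations, i.e.\ three free bits per node. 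Demanding that $X^{(1)}$ and $X^{(2)}$ map back into $\mathcal{S}$ then forces the total decoration weight of each copy to be even — four parity conditions, one redundant given the single-node constraints, hence three independent ones. This removes three bits overall, leaving $3(n-1)$ free bits, i.e.\ $8^{n-1}$ choices, which are exactly the $n-1$ independent $B$-group applications of Theorem~\ref{theorem:n-1}.

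Multiplying the independent $6$ choices for $h$ by the $8^{n-1}$ choices for the $B$-part, and using the distinctness already established in Theorem~\ref{thm:BH-subgroup} (phaselessly $H\cap B=\{I\}$, so no two assignments collide), each $U\in\mathcal{G}_n$ corresponds to exactly one such pair, giving $|\mathcal{G}_n|\le 6\times 8^{n-1}$. The step I expect to be the main obstacle is the homogeneity claim for the $Z$-part: rigorously excluding inhomogeneous $Z$-actions (and, relatedly, any exotic local two-qubit Clifford) that are admissible node-by-node yet violate the global adjacency constraints linking neighboring nodes through the $Z_iZ_{i+1}$ stabilizers. This is exactly where the connectivity of the $Z$-type generators (and hence $n\ge 2$) is essential, and where the parity counting must be done carefully so that the one redundant constraint is neither double-counted nor dropped.
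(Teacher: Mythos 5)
Your proposal is correct and follows essentially the same route as the paper's proof: a stabilizer-based count in which the $Z_iZ_{i+1}$ chain constraints force the local $Z$-action to be one of $6$ homogeneous maps, and the residual $X$-decorations give $8$ choices per node, cut down to $8^{n-1}$ by the global $X^{\otimes n}$ stabilizer constraint. The only cosmetic differences are that you factor out the $H$-element explicitly before counting and derive the $8^{n-1}$ via three independent parity conditions rather than by noting that the last node's $X$-map is forced; these are equivalent to the paper's argument.
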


\begin{proof}
We provide a stabilizer-based counting argument, focusing on how each gate must map the Pauli operators $Z$ and $X$ appearing in the GHZ stabilizers.

\noindent\textbf{1. Allowed $Z$-mapping.}
Consider one of the two $n$-qubit GHZ states.
Its stabilizer group contains $(n-1)$ operators of the form $Z_iZ_{i+1}$ (for $1 \le i \le n-1$) and one $X_1X_2\cdots X_n$.

Of note is that we need to consider both the stabilizer group of a single GHZ-basis state, and of two GHZ-basis states. Writing down the aforementioned X stabilizer operator when considering both GHZ-basis states would look like $X\otimes I\otimes X\otimes I\dots$, because of the qubit indexing convention we have chosen.

When acting on \emph{two} such $n$-qubit GHZ-basis states, each local phaseless 2-qubit gate must preserve the $Z$ entries in the stabilizer operators. In particular, when a given qubit originally has a $Z$ entry, it cannot be turned into $X$ or $Y$ because that would break the GHZ stabilizer's required commutation relations and its "even-$Z$" structure. Hence, for a given qubit $Z$, there are \emph{three} possible mapping on the two-qubit subsystem: $Z\otimes I\to Z\otimes I$, $Z\otimes I\to I\otimes Z$, $Z\otimes I \to Z\otimes Z$ (or symmetrically for $I\otimes Z$). Note, the two qubits in these permitted mappings are at a single node but belong to separate GHZ-basis states.
And if we enumerate the permitted combinations of aforementioned mappings when we have two qubits, there are exactly \emph{six} of them enumerated in Table~\ref{tab:allowZsup}.

\noindent\textbf{2. Allowed $X$-mapping under a given $Z$-mapping.}
Next, we consider what $X$ mapping could be given a chosen $Z$-mapping.  Recall that on a single qubit, $X$ and $Z$ anticommute; on different qubits, they commute. These commutation/anticommutation relations must be preserved.  Once a $Z$-mapping is chosen (from among the six possibilities above), there turn out to be \emph{exactly eight} ways to assign an $X$-mapping that respects all GHZ stabilizer commutation relations.  

For example, suppose for Z mapping we fix $Z_1 \,\to\, ZI$ and $Z_2 \,\to\, IZ$.  Then for X mapping:
\[
\begin{aligned}
  X_1 &\;\text{must map to one of } \bigl\{\: XZ,\, XI,\, YZ,\, YI\bigr\}, \\
  X_2 &\;\text{must map to one of } \bigl\{\: ZX,\, IX,\, ZY,\, IY\bigr\}.
\end{aligned}
\]

subject also to the requirement $[X_1,X_2] = 0$.  A direct check shows there are exactly 8 consistent ways.  In general, each valid $Z$-mapping admits precisely 8 valid $X$-mappings, giving up to $6\times 8 = 48$ local phaseless transformations \emph{per node}.

\noindent\textbf{3. Combining gates across $n$ qubits.}
Having $6 \times 8$ possibilities for each local 2-qubit gate might suggest $(48)^n$ total gate assignments for $n$ qubits.  However, the GHZ stabilizer includes \emph{global} constraints linking all $n$ qubits; thus these local choices are \emph{not} fully independent.  In particular:
\begin{itemize}
\item
\emph{Global $Z$-type constraints.}\enspace
The $Z$-based stabilizer generators appear in $n-1$ products such as $Z_1Z_2$, $Z_2Z_3$, \dots, $Z_{n-1}Z_n$.  Once the mappings for the first qubits are chosen, the remaining qubits’ $Z$-mappings become fixed to have the same mapping by the requirement that all these products remain purely $Z$-type and mutually commute. It turns out only \emph{6 total} ways persist upon enforcing all chain-like $Z$ commutation relations, since every nodes have 6 Z-mappings.
\item
\emph{Global $X$-type constraint.}\enspace
Unlike with the $Z$-mappings, each node can choose freely from 8 possible $X$-mappings (except the last node, where there is no choice left). This is due to the $X_1X_2\cdots X_n$ stabilizer tying together all $n$ qubits in a product.  Thus, once the $X$-mapping is chosen for the \emph{first $(n-1)$} qubits, the last qubit’s $X$-mapping is forced (up to a possible phase) to preserve the overall $X_1X_2\cdots X_n$ stabilizer as an $n$-fold $X$-based stabilizer. Consequently, each of the mappings at the first $(n-1)$ qubits can be chosen freely among 8 possibilities, giving $8^{n-1}$ total $X$-mapping choices.
\end{itemize}

Multiplying these global constraints gives us
\begin{equation}
  6 \;\times\; 8^{\,n-1}   
\end{equation}

as an upper bound on the total number of distinct (phaseless) GHZ-preserving unitaries. Therefore,
\begin{equation}
  \bigl|\mathcal{G}_n\bigr|
  \;\le\;
  6 \,\times\, 8^{\,n-1}.    
\end{equation}

\end{proof}

\begin{theorem}[Completeness of B and H for GHZ-preserving gates]
\label{thm:completeness_BH}

Let $G$ be the group of all phaseless Clifford unitaries acting on two $n$-qubit GHZ-basis states such that any GHZ-basis state is mapped to another GHZ-basis state (i.e.\ the GHZ-preserving group). 
Then
\begin{equation}
  G \;=\;\langle B,H\rangle,  
\end{equation}

and in particular,
\begin{equation}
 \bigl|G\bigr|
  \;=\;
  6\,\times\,8^{\,n-1}.      
\end{equation}

\end{theorem}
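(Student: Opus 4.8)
The plan is to obtain the result as an immediate sandwich between the two preceding theorems, since together they already supply matching lower and upper bounds on the size of the GHZ-preserving group. First I would observe that the group $G$ in this statement is exactly the set $\mathcal{G}_n$ of phaseless Clifford unitaries on two $n$-qubit GHZ-basis states that map every GHZ-basis state to another GHZ-basis state; the two definitions coincide, so I may freely apply results stated for $\mathcal{G}_n$ to $G$.

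The lower bound comes from Theorem~\ref{thm:BH-subgroup}: the product set $\langle B,H\rangle$ is contained in $\mathcal{G}_n$, since every such product is GHZ-preserving, and it contains exactly $6\times 8^{\,n-1}$ distinct phaseless operations, the non-collision having already been secured there by the $H\cap B=\{I\}$ argument. Hence $6\times 8^{\,n-1}=|\langle B,H\rangle|\le |G|$. The upper bound comes from Theorem~\ref{thm:ghz_preserving_enumeration}, which yields $|G|=|\mathcal{G}_n|\le 6\times 8^{\,n-1}$ by counting the admissible $Z$-mappings and $X$-mappings subject to the global chain and product constraints.

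Combining the two inequalities gives $6\times 8^{\,n-1}\le |G|\le 6\times 8^{\,n-1}$, so $|G|=6\times 8^{\,n-1}$. Since $\langle B,H\rangle$ is a subset of the finite set $G$ with the same cardinality, the inclusion must be an equality, whence $G=\langle B,H\rangle$ and the stated count follows.

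I would expect essentially no obstacle in the argument itself, as the substantive work has been front-loaded into the two cited theorems. The only points demanding care are confirming that $G$ and $\mathcal{G}_n$ denote the identical object (same ambient phaseless quotient, same GHZ-preserving condition imposed on both copies), and invoking finiteness to upgrade \enquote{subset with equal cardinality} into \enquote{equal sets}. Whatever genuine difficulty exists therefore resides inside Theorem~\ref{thm:BH-subgroup}'s distinctness claim and Theorem~\ref{thm:ghz_preserving_enumeration}'s handling of the global $Z$- and $X$-constraints, both of which are taken as established here.
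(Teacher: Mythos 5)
Your proposal is correct and follows essentially the same route as the paper: it sandwiches $|G|$ between the lower bound $|\langle B,H\rangle| = 6\times 8^{\,n-1}$ from Theorem~\ref{thm:BH-subgroup} and the upper bound $|\mathcal{G}_n|\le 6\times 8^{\,n-1}$ from Theorem~\ref{thm:ghz_preserving_enumeration}, then uses finiteness of the containment $\langle B,H\rangle\subseteq G$ to conclude equality. The only (harmless) difference is that you make the identification $G=\mathcal{G}_n$ explicit, which the paper leaves implicit.
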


\begin{proof}
In Theorem~\ref{thm:BH-subgroup}, we showed that 
\begin{equation}
\langle B,H\rangle 
\;\subseteq\; 
G
\quad\text{and}\quad
\bigl|\langle B,H\rangle\bigr|
\;=\;
6 \;\times\; 8^{n-1}.  
\end{equation}

In Theorem~\ref{thm:ghz_preserving_enumeration}, we also showed that 
\begin{equation}
  |G|
\;\le\;
6 \;\times\;8^{\,n-1}.  
\end{equation}

Since $\langle B,H\rangle$ is a subgroup of $G$ (both being finite groups) and they have the same finite cardinality, it follows that
\begin{equation}
\langle B,H\rangle 
\;=\;
G
\quad\text{and}\quad
|G|
\;=\;
6 \;\times\;8^{\,n-1}.    
\end{equation}

Concretely, this means \emph{every} phaseless GHZ-preserving gate on $n$ qubits can be realized as a product of one homogeneous $H$ gate (same gate across all $n$ nodes) and up to $(n-1)$ bilocal $B$ gates between distinct pairs of nodes. Symbolically,
\begin{equation}
G_{n}
\;=\;
\Bigl\{
  \prod_{i=1}^{n-1}
    b_{\,i,i+1}\
    h
  \;\Bigm|\;
  b_{\,i,i+1}\in B,\
  h\in H\
\Bigr\}.
\end{equation}

\end{proof}

\subsection{Group structure of B and H}

\begin{theorem}
    $B$ group's structure is isomorphic to $\mathbb{Z}_2 \otimes \mathbb{Z}_2 \otimes \mathbb{Z}_2$.
    \label{theorem:Bstructure}
\end{theorem}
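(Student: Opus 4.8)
The plan is to exhibit an explicit isomorphism from $\mathbb{Z}_2 \otimes \mathbb{Z}_2 \otimes \mathbb{Z}_2$ onto the $B$ group, using the three independent binary choices identified in Theorem~\ref{theorem:Bgroup}. By that theorem every element of $B_{12}$ is fixed by a triple $(a,b,c) \in \{0,1\}^3$ recording whether the $CZ$ gate is applied, whether the phase gate $S$ acts on the first qubit at each node, and whether $S$ acts on the second. Accordingly I would define
\begin{equation}
\phi \colon \mathbb{Z}_2^{3} \to B_{12}, \qquad \phi(a,b,c) = g_{abc}\otimes g_{abc}\otimes I^{\otimes 2n-4},\quad g_{abc} = CZ^{a}\,(S^{b}\otimes S^{c}),
\end{equation}
and verify that it is a group isomorphism.

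First I would check that $\phi$ is a homomorphism, which reduces to three facts read off from the diagonal structure of $CZ$ and $S$: (i) $CZ$ and $S$ are simultaneously diagonal and hence commute; (ii) $S\otimes I$ and $I\otimes S$ act on disjoint qubits and hence commute; and (iii) each generator squares to the identity \emph{phaselessly}, since $CZ^2 = I$ exactly while $S^2 = Z$ is a local Pauli and is therefore trivial in the quotient $\mathcal{C}^*_2 = \mathcal{C}_2/\mathcal{P}_2$. Together these show the generators commute pairwise and are each of order two, so $\phi$ sends componentwise addition in $\mathbb{Z}_2^3$ to composition in $B_{12}$.

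Next I would establish bijectivity. Surjectivity is immediate from the explicit description of $B_{12}$ in Theorem~\ref{theorem:Bgroup}: its eight elements are exactly the images $\phi(a,b,c)$. For injectivity it suffices to show these eight images are pairwise distinct as phaseless gates, which I would do by comparing the canonical $Z/X$ stabilizer mappings tabulated in Table~\ref{tab:Fgates_sup}: each triple $(a,b,c)$ induces a distinct action on the generators $XI$, $IX$, $ZI$, $IZ$, so no nontrivial product of generators collapses to the identity in $\mathcal{C}^*_2$. A bijective homomorphism is an isomorphism, yielding $B \cong \mathbb{Z}_2 \otimes \mathbb{Z}_2 \otimes \mathbb{Z}_2$.

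The main obstacle --- really the only subtlety --- is the bookkeeping in the phaseless quotient. The isomorphism holds precisely because $S^2$ and the residual signs from $CZ^2$ are Pauli operators and hence invisible in $\mathcal{C}^*_2$; in the full Clifford group $S$ has order four and the statement would fail. The essential step is therefore to confirm that each generator has order exactly two in the quotient and that the eight cosets are genuinely distinct, both of which follow directly from the tableau entries in Table~\ref{tab:Fgates_sup} rather than from any delicate calculation.
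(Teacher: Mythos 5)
Your proposal is correct and follows essentially the same route as the paper's proof: both identify the three independent binary choices (apply $CZ$ or not, apply $S$ on each of the two qubits or not), use that the generators commute and square to the identity in the phaseless setting, and conclude the direct-product structure of order $8$. Your version is somewhat more careful than the paper's --- in particular you correctly note that $S^2 = Z$ is trivial only because it is a local Pauli absorbed by the quotient $\mathcal{C}^*_2 = \mathcal{C}_2/\mathcal{P}_2$ (the paper says only ``up to phase''), and you explicitly verify distinctness of the eight cosets via Table~\ref{tab:Fgates_sup} --- but the underlying argument is the same.
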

\begin{proof}
    Both the CZ gate and the S gate belong to $\mathbb{Z}_2$ groups because they each have two possible actions: either apply the gate or do nothing (identity operation). Since these gates commute with one another and are their own inverses~(up to phase), their independent applications naturally form a direct product structure.

    For the S gate, since there are two qubits involved in bilocal operations, there are two independent $\mathbb{Z}_2$ groups corresponding to the two qubits on which the S gates can act. Specifically:
    One $\mathbb{Z}_2$ corresponds to the application of the S gate on the first qubit.
    Another $\mathbb{Z}_2$ corresponds to the application of the S gate on the second qubit.

    The CZ gate forms its own $\mathbb{Z}_2$ group because it can either be applied between the two qubits or not applied (identity action). This is independent of the application of the S gates.

    Thus, the overall group structure of the B group is isomorphic to $\mathbb{Z}_2 \otimes \mathbb{Z}_2 \otimes \mathbb{Z}_2$.
    Consequently, the order of this group is $2 \times 2 \times 2 = 8$, which accounts for all possible combinations of applying or not applying the CZ gate and the S gates across qubits in the bilocal B group configuration.
\end{proof}

\begin{theorem}
    The $H$ group is isomorphic to \( \mathbb{D}_3 \), the dihedral group of order 6.
\end{theorem}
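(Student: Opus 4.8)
The plan is to exploit the homogeneous structure to reduce every group relation to a single-node, two-qubit identity, and then to invoke the classification of groups of order six. Write $a=\mathit{CNOT}_{12}^{\otimes n}$ and $b=\mathit{CNOT}_{21}^{\otimes n}$ for the two generators of $H$. Because the tensor-power map $g\mapsto g^{\otimes n}$ respects gate multiplication, $g^{\otimes n}h^{\otimes n}=(gh)^{\otimes n}$, it is a group homomorphism, and it descends to the phaseless quotient. Consequently any word relation $w(a,b)=I$ holds in $H$ if and only if the corresponding single-node relation $w(\mathit{CNOT}_{12},\mathit{CNOT}_{21})=I$ holds on two qubits, so every step below is a two-qubit stabilizer-tableau check that can be read directly off Table~\ref{tab:Hgates_sup}.

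First I would record that $a$ and $b$ are involutions: since $\mathit{CNOT}^2=I$ at a single node, we immediately get $a^2=b^2=I$. Next I would evaluate the product $ab$ and its powers at the single-node level, finding $ab=DCX_{12}$, $(ab)^2=DCX_{21}$, and $(ab)^3=I$; since $DCX_{12}$ and $DCX_{21}$ are distinct from the identity, $ab$ has order exactly $3$. In particular $ab=DCX_{12}\neq DCX_{21}=ba$, so the two generators do not commute and $H$ is non-abelian.

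To conclude, I would combine these facts with the earlier enumeration lemma, which gives $\lvert H\rvert=6$. The only groups of order six are the cyclic group $\mathbb{Z}_6$ and the symmetric group $S_3\cong\mathbb{D}_3$; as $\mathbb{Z}_6$ is abelian while $H$ is not, we obtain $H\cong\mathbb{D}_3$. To make the dihedral picture concrete I would exhibit the order-$3$ rotation subgroup $\{I,\,DCX_{12},\,DCX_{21}\}$ together with the three order-$2$ reflections $\{\mathit{CNOT}_{12},\,\mathit{CNOT}_{21},\,\mathit{SWAP}\}$, and check that $a,b$ satisfy the standard presentation $a^2=b^2=(ab)^3=I$ of $\mathbb{D}_3$. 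Equivalently, and perhaps more transparently, one may note that the CNOT circuits on two qubits realize exactly the invertible linear maps $GL_2(\mathbb{F}_2)$ on the two-bit label space, a group of order $6$ already isomorphic to $\mathbb{D}_3$, and that the homogeneous embedding is injective because the six single-node tableaux are pairwise distinct; hence $H\cong GL_2(\mathbb{F}_2)\cong\mathbb{D}_3$.

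There is no deep obstacle in this argument; the only point requiring care is that the homogeneous tensor-power map be well defined and faithful on the \emph{phaseless} quotient, i.e.\ that the local Pauli corrections tolerated in the definition of $H$ never identify two of the six gates or spoil the relations. This holds because $\mathit{CNOT}$-conjugation sends Paulis to Paulis and the six stabilizer mappings in Table~\ref{tab:Hgates_sup} are already pairwise distinct, so no phaseless collapse can occur; the remaining work is just the routine two-qubit tableau arithmetic verifying the order-$3$ relation.
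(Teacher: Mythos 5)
Your proposal is correct and follows essentially the same route as the paper's proof: both rest on the enumeration $\lvert H\rvert=6$, the observation that $H$ is non-Abelian, the fact that $\mathbb{D}_3$ is the unique non-Abelian group of order six, and the identification of $\mathit{CNOT}_{12},\mathit{CNOT}_{21},\mathit{SWAP}$ as reflections and $DCX_{12},DCX_{21}$ as rotations. Your version is somewhat more explicit — you verify the presentation relations $a^2=b^2=(ab)^3=I$ directly and add the (correct) alternative identification with $GL_2(\mathbb{F}_2)$ — but these are refinements of the same argument rather than a different approach.
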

\begin{proof}
    The $H$ group contains six elements: Identity, SWAP, and the four "CNOT family" gates. These gates correspond to the symmetries of an equilateral triangle, which include two rotations and three reflections. The $H$ group is isomorphic to \( \mathbb{D}_3 \) because \( \mathbb{D}_3 \) is the only non-Abelian group of order 6.
    A standard presentation for $\mathbb{D}_3$ is
    \begin{equation}
    \langle r, s \,\mid\, r^3 = e,\; s^2 = e,\; s\,r\,s = r^{-1} \rangle,    
    \end{equation}
        
    where $r$ can be viewed as a rotation by $120^\circ$ and $s$ as a reflection. 

    We identify $\mathit{CNOT}_{12}$ and $\mathit{CNOT}_{21}$ as two distinct reflections within $\mathbb{D}_3$. Indeed, one can verify that each $\mathit{CNOT}$ gate is its own inverse, analogous to $s^2 = e$ in the dihedral presentation. Moreover, 
    \begin{equation}
    \mathit{SWAP} = \mathit{CNOT}_{12}\;\mathit{CNOT}_{21}\;\mathit{CNOT}_{12}  \Longleftrightarrow s_3=s_1\,s_2\,s_1   
    \end{equation}

    mirrors the standard relation in \(\mathbb{D}_3\) that one reflection can be expressed as the composition of two others. And $DCX_{12}$ and $DCX_{21}$ each plays rotation. See Table~\ref{tab:gate_transformations}
\end{proof}

\begin{table*}[t]
    \centering
    \renewcommand{\arraystretch}{1.5} 
    \begin{tabular}{c|cccccc}
        & \textbf{Identity} & \textbf{SWAP} & \textbf{CNOT}$_{12}$ & \textbf{DCX}$_{21}$ & \textbf{DCX}$_{12}$ & \textbf{CNOT}$_{21}$ \\
        \hline
        \textbf{Identity}   & \textbf{Identity} & \textbf{SWAP} & \textbf{CNOT}$_{12}$ & \textbf{DCX}$_{21}$ & \textbf{DCX}$_{12}$ & \textbf{CNOT}$_{21}$ \\
        \textbf{SWAP}       & \textbf{SWAP} & \textbf{Identity} & \textbf{DCX}$_{21}$ & \textbf{CNOT}$_{12}$ & \textbf{CNOT}$_{21}$ & \textbf{DCX}$_{12}$ \\
        \textbf{CNOT}$_{12}$  & \textbf{CNOT}$_{12}$ & \textbf{DCX}$_{12}$ & \textbf{Identity} & \textbf{CNOT}$_{21}$ & \textbf{SWAP} & \textbf{DCX}$_{21}$ \\
        \textbf{DCX}$_{21}$  & \textbf{DCX}$_{21}$ & \textbf{CNOT}$_{21}$ & \textbf{SWAP} & \textbf{DCX}$_{12}$ & \textbf{Identity} & \textbf{CNOT}$_{12}$ \\
        \textbf{DCX}$_{12}$  & \textbf{DCX}$_{12}$ & \textbf{CNOT}$_{12}$ & \textbf{CNOT}$_{21}$ & \textbf{Identity} & \textbf{DCX}$_{21}$ & \textbf{SWAP} \\
        \textbf{CNOT}$_{21}$  & \textbf{CNOT}$_{21}$ & \textbf{DCX}$_{21}$ & \textbf{DCX}$_{12}$ & \textbf{CNOT}$_{12}$ & \textbf{SWAP} & \textbf{Identity} \\
    \end{tabular}
    \caption{Full multiplication table of the $H$ group}
    \label{tab:gate_transformations}
\end{table*}

\par\textbf{\centering End of Proof.}

\noindent
Throughout this work, GHZ-preserving operations have been discussed only within the framework of two-qubit Clifford gates applied across the copies of GHZ states. 
We have not considered more general three-qubit (or higher-qubit) Clifford gates that might also preserve the GHZ basis. 
Restricting to two-qubit gates is natural both from the perspective of physical implementability and from the standpoint of standard distillation protocols, and suffices for the classification and optimization results presented here. 
An extension of the analysis to multi-qubit GHZ-preserving gates is left for future investigation.

\section{Asymmetry of X and Z Errors in GHZ States}
\label{asymmetry}
The entanglement structure of GHZ states leads to a fundamental asymmetry between the effects of $X$ and $Z$ errors. Consider the $n$-qubit GHZ state:
\begin{equation}
\ket{\mathrm{GHZ}_n} = \frac{1}{\sqrt{2}} (\ket{0}^{\otimes n} + \ket{1}^{\otimes n}), 
\end{equation}
which is stabilized by the operators \( X^{\otimes n} \) and \( Z_i Z_{i+1} \) for all \( i \in \{1, \ldots, n-1\} \).

A single $Z$ error on any qubit flips the relative phase between the two components, mapping the state to:
\begin{equation}
Z_k \ket{\mathrm{GHZ}_n} = \frac{1}{\sqrt{2}} (\ket{0}^{\otimes n} - \ket{1}^{\otimes n}),    
\end{equation}
independently of the qubit \( k \) on which it acts. All such errors lead to the same syndrome and result in same state.

In contrast, an $X$ error on qubit \( k \) leads to:
\begin{equation}
X_k \ket{\mathrm{GHZ}_n} = \frac{1}{\sqrt{2}} (\ket{0 \ldots 1_k \ldots 0} + \ket{1 \ldots 0_k \ldots 1}),    
\end{equation}
producing a state that depends on the location of the error. This transformation leads to distinct, non-degenerate error outcomes~(See left panel of Fig.~\ref{fig:XZ}).

This asymmetry contrasts sharply with the Bell pair case, where $X$ and $Z$ errors are symmetric under local basis rotations, allowing both types of measurements to be used interchangeably in distillation protocols~(Fig.~\ref{fig:CNOT_error}). For GHZ states, however, standard protocols such as pumping and nested distillation~\cite{aschauer2005,dur2007} typically perform only $Z$-basis measurements, thereby detecting only $X$ errors. One might consider using $X$-basis measurements to detect $Z$ errors directly. However, in the presence of noise, performing a standalone $X$ basis measurement often results in lower output fidelity due to the nonlocal propagation of $Z$ errors across the entangled state~(See right panel of Fig.~\ref{fig:XZ}).

\begin{figure}[htbp]
    \centering
    \begin{minipage}{0.4\columnwidth}
        \centering
        \begin{quantikz}
            \text{State 1} & \ctrl{1} &              \\
            \text{State 2} & \targ{}  & \meterD{Z}
        \end{quantikz}
        \\[0.5em]
        (a) Z-basis measurement.
    \end{minipage}
    \hfill
    \begin{minipage}{0.4\columnwidth}
        \centering
        \begin{quantikz}
            \text{State 1} & \targ{}   &              \\
            \text{State 2} & \ctrl{-1} & \meterD{X}
        \end{quantikz}
        \\[0.5em]
        (b) X-basis measurement.
    \end{minipage}
    \caption{\textbf{Error propagation under CNOT and basis-dependent measurements.} Depending on the measurement basis, $X$ and $Z$ errors propagate differently through the CNOT gate, leading to different detectability of errors.}
    \label{fig:CNOT_error}
\end{figure}
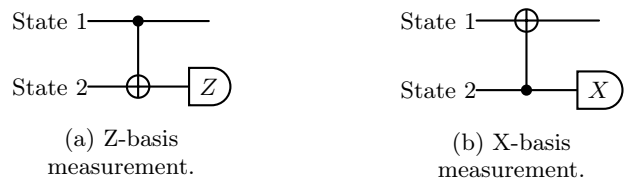

As a result, the output state after each distillation round usually exhibits a biased Pauli error, dominated by undetected $Z$ errors. To mitigate this, most protocols apply twirling after each round to symmetrize the error channel—effectively transforming the biased error into an approximate depolarizing channel. Our method avoids such artifacts by explicitly modeling the asymmetric error structure, rather than assuming it is erased via twirling.
In particular, we do not allow twirling steps at arbitrary intermediate stages of the protocol, since such probabilistic randomization is rarely implemented in physical experiments. This distinction explains why our results differ from those of nested distillation in the low-fidelity regime, while offering a more realistic assessment of experimental performance.

Formally, the Pauli twirling of a quantum channel $\mathcal{E}$ is defined as:
\begin{equation}
\mathcal{E}_\text{twirled}(\rho) = \frac{1}{|\mathcal{P}|} \sum_{P \in \mathcal{P}} P^\dagger\, \mathcal{E}(P \rho P^\dagger)\, P,
\end{equation}
where $\mathcal{P}$ denotes the $n$-qubit Pauli group. This operation maps any error channel $\mathcal{E}$ into a Pauli channel by averaging over all Pauli conjugations, thereby eliminating off-diagonal terms in the process matrix. In practice, this is done by randomly inserting Pauli gates before and after the noisy operation.

\begin{figure*}[t]
    \centering
    \includegraphics[width=0.9\linewidth]{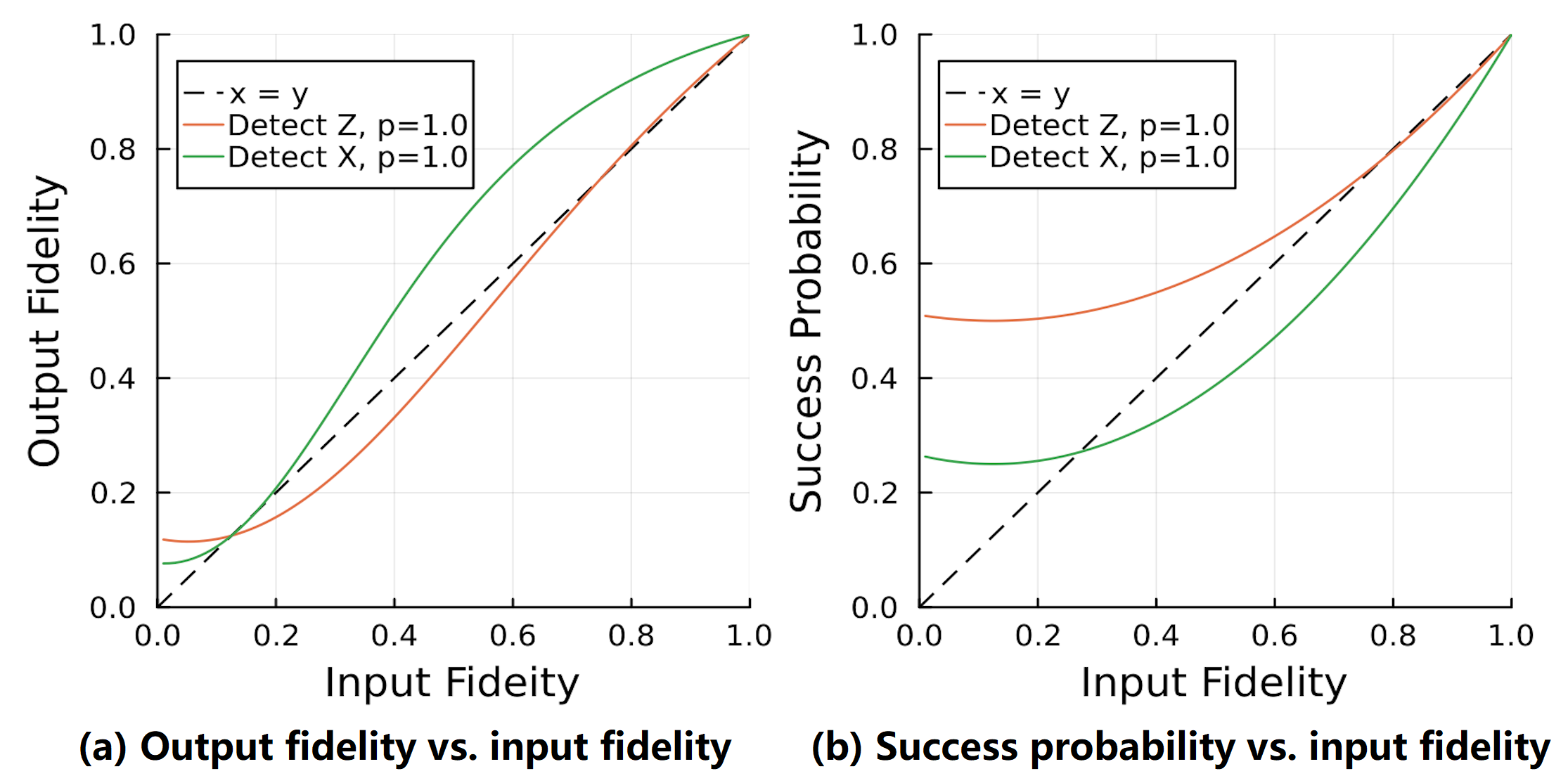}
    \caption{\textbf{Asymmetry of $X$ and $Z$ errors in a 3-qubit GHZ state.}~This figure illustrates the result of a single round of CNOT-based distillation applied to noisy GHZ states under depolarization noise. In both panels, the orange curve corresponds to attempts to detect $Z$ errors, and the green curve corresponds to attempts to detect $X$ errors. (a) shows the output fidelity as a function of input fidelity. Even under perfect gate operations~($p=0$), attempts to detect $Z$ errors~(measure in X basis) result in only minor improvement—or even degradation—of fidelity. (b) shows the success probability versus input fidelity, indicating that GHZ states with a relative phase (caused by $Z$ errors) are effectively indistinguishable during the distillation process.}
    \label{fig:XZ}
\end{figure*}

\section{Entanglement Distillation Protocols}
\label{entanglement pumping}
\subsection{Pumping Protocol}
Entanglement pumping is an established distillation protocol originally developed for bipartite entangled states such as Bell pairs~\cite{dur2007,dur2003,zoller1998,zoller1999}. It reduces the memory requirements of conventional schemes like BBPSSW or DEJMPS by avoiding the need to store multiple identical high-fidelity states simultaneously. Instead, a single stored state is repeatedly purified using freshly generated low-fidelity copies, effectively “pumping” its fidelity upward over time~(see Fig.~\ref{fig:twocolumn} for a comparison with our optimized circuits).

In each round, a raw state (e.g., one transmitted through a noisy channel) is used to distill a stored target state. If distillation succeeds, the fidelity of the target state is improved; if it fails, the process restarts from scratch using new raw states. This sequential structure greatly reduces spatial resources—only two states need to be stored at any time—but at the cost of increased temporal resources due to repeated failures and retries~(Fig.~\ref{fig:pumping}).

While originally proposed for Bell pairs, the same recurrence principle extends straightforwardly to multipartite entangled states such as GHZ states. Following the general multipartite purification framework introduced by Dür and Briegel~\cite{dur2003,dur2007}, we implement the corresponding GHZ-version of the pumping protocol as a reference baseline.
Each pumping round proceeds as follows: a stored GHZ state is paired with a newly generated raw GHZ state of input fidelity $f_{\mathrm{in}}$; for each qubit position $i$ in the two states, a CNOT gate is applied from the $i$-th qubit of stored state (control) to the raw state (target), followed by $Z$-basis measurements on all qubits of the raw state~(see Fig.~\ref{fig:CNOT_error}a). 
The measurement outcomes are used to determine whether the round is successful. Success is defined as all $Z$-basis measurement results being identical (coincidence across all measured qubits), in which case the post-measurement stored state has updated fidelity $f_{\mathrm{out}}$ (computed under the assumption of twirling, which symmetrizes $X$ and $Z$ errors); otherwise, the stored state is discarded and replaced by a fresh raw state. The process is repeated until the desired number of output GHZ states $K$ is obtained.  
All simulations use $n=3$ qubits per GHZ state, and gate error rate $p=0.01$, measurement error rate $\eta=0.01$ as in the main text.

\usetikzlibrary{fit, positioning, arrows.meta}
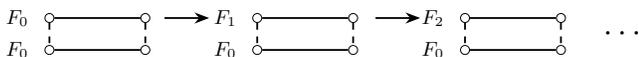
\begin{figure}[htbp]
    \centering
    \scalebox{0.85}{
    \begin{tikzpicture}[
    node distance=0.5cm,
    qubit/.style={circle, draw, fill=white, minimum size=4pt, inner sep=1pt},
    connection/.style={draw, thick},
    dashed_connection/.style={draw, dashed, thick},
    arrow/.style={->, thick, >=Stealth}
    ]
    \node[qubit] (l1a) at (-4, 0.5) {};
    \node[qubit] (l1b) at (-2.5, 0.5) {};
    \node[qubit] (l1c) at (-4, 0) {};
    \node[qubit] (l1d) at (-2.5, 0) {};
    \draw[connection] (l1a) -- (l1b);
    \draw[connection] (l1c) -- (l1d);
    \draw[dashed_connection] (l1a) -- (l1c);
    \draw[dashed_connection] (l1b) -- (l1d);
    \node at (-4.5, 0.5) {$F_0$};
    \node at (-4.5, 0) {$F_0$};
    \node[qubit] (m1a) at (-0.75, 0.5) {};
    \node[qubit] (m1b) at (0.75, 0.5) {};
    \node[qubit] (m1c) at (-0.75, 0) {};
    \node[qubit] (m1d) at (0.75, 0) {};
    \draw[connection] (m1a) -- (m1b);
    \draw[connection] (m1c) -- (m1d);
    \draw[dashed_connection] (m1a) -- (m1c);
    \draw[dashed_connection] (m1b) -- (m1d);
    \node at (-1.25, 0.5) {$F_1$};
    \node at (-1.25, 0) {$F_0$};
    \node[qubit] (r1a) at (2.5, 0.5) {};
    \node[qubit] (r1b) at (4, 0.5) {};
    \node[qubit] (r1c) at (2.5, 0) {};
    \node[qubit] (r1d) at (4, 0) {};
    \draw[connection] (r1a) -- (r1b);
    \draw[connection] (r1c) -- (r1d);
    \draw[dashed_connection] (r1a) -- (r1c);
    \draw[dashed_connection] (r1b) -- (r1d);
    \node at (2, 0.5) {$F_2$};
    \node at (2, 0) {$F_0$};
    \draw[arrow] (-2.2, 0.5) -- (-1.5, 0.5);
    \draw[arrow] (1.1, 0.5) -- (1.8, 0.5);
    \node at (5, 0.25) {\Large $\cdots$};
    
    \end{tikzpicture}
    }
    \caption{\textbf{Schematic illustration of entanglement pumping.}~A fresh raw state (state 1) with fixed fidelity \( F_0 \) is repeatedly generated and used to purify a stored target state (state 2). Over successive rounds, the fidelity of the target state converges to a fixed point \( F_{\mathrm{fix}} < 1 \), determined by the quality of the raw state. This process reduces memory requirements by requiring only two states to be stored at a time, at the cost of increased temporal overhead.}
    \label{fig:pumping}
\end{figure}

\subsection{Nested (Recursive) Protocol}

Nested distillation is a recursive extension of entanglement pumping that aims to achieve higher output fidelities by organizing multiple purification rounds into a hierarchical structure~\cite{dur2003}. Instead of purifying a single target state using a stream of fresh states, nested protocols use multiple raw states to produce several intermediate purified states, which are then distilled again in subsequent rounds~(see Fig.~\ref{fig:nest} for a comparison with our optimized circuits).

For instance, a two-level nested protocol might begin by grouping $N$ noisy states into $N/2$ pairs, each undergoing a purification step as pumping. For each qubit position $i$ in the two states, a CNOT gate is applied from the $i$-th qubit of one state (control) to the $i$-th qubit of the other state (target), followed by $Z$-basis measurements on all qubits of the target state~(Fig.~\ref{fig:CNOT_error}a).
A round is considered successful only if all $Z$-basis measurement results are identical (coincidence across all measured qubits), in which case the remaining state’s fidelity $f_{\mathrm{out}}$ is computed under the assumption of twirling, which symmetrizes $X$ and $Z$ errors.  
If unsuccessful, both states are discarded. 
The surviving intermediate states are then grouped again into pairs and distilled in the same manner, until a single high-fidelity state is produced.  

The number of required raw states grows exponentially with the number of nesting levels, making these protocols memory-intensive and less flexible in practice (Fig.~\ref{fig:nested}).  
Despite their theoretical effectiveness, nested protocols often assume idealized conditions, including perfect Clifford gates, ideal measurements, and symmetric error channels.  
Moreover, most implementations rely on $Z$-basis measurements only, detecting primarily $X$ errors while leaving $Z$ errors unaddressed.  
This leads to biased intermediate states that require twirling to restore symmetric error profiles, as discussed in Appendix~\ref{asymmetry}.

In our simulations, we compare our optimized circuits against nested protocols with two and three levels of purification, using $4$ and $8$ raw GHZ states respectively. We set $n=3$ qubits per GHZ state, gate error rate $p=0.01$, and measurement error rate $\eta=0.01$ as in the main text. The results highlight the trade-off between fidelity, success probability, and resource consumption across different distillation strategies.

For completeness, Fig.~\ref{fig:sequence} compares the basic two colorable graph recurrence protocol with the nested version~\cite{aschauer2005}. The figure visualizes three consecutive purification rounds with different measurement-basis combinations (ZZZ–XXX).
In general two-colorable graph states, the optimal purification order may involve non-alternating sequences of P1 and P2.
However, for GHZ states—which are highly asymmetric two-colorable graphs with only one qubit in one color set—the combination of twirling and repeated Z-basis measurements yields better performance. This confirms that using the nested protocol as the reference baseline in our main-text comparisons provides a fair and representative benchmark for evaluating GHZ-preserving circuits.

\begin{figure}[htbp]
    \centering
    \scalebox{0.8}{
    \begin{tikzpicture}[
    node distance=0.5cm,
    qubit/.style={circle, draw, fill=white, minimum size=4pt, inner sep=1pt},
    connection/.style={draw, thick},
    dashed_connection/.style={draw, dashed, thick},
    arrow/.style={->, thick, >=Stealth}
    ]
    
    \node at (-6, 3) {\Large $F_0$};
    
    \node[qubit] (l1a) at (-7, 2) {};
    \node[qubit] (l1b) at (-5, 2) {};
    \node[qubit] (l1c) at (-7, 1.5) {};
    \node[qubit] (l1d) at (-5, 1.5) {};
    \draw[connection] (l1a) -- (l1b);
    \draw[connection] (l1c) -- (l1d);
    \draw[dashed_connection] (l1a) -- (l1c);
    \draw[dashed_connection] (l1b) -- (l1d);
    
    \node[qubit] (l2a) at (-7, 0.5) {};
    \node[qubit] (l2b) at (-5, 0.5) {};
    \node[qubit] (l2c) at (-7, 0) {};
    \node[qubit] (l2d) at (-5, 0) {};
    \draw[connection] (l2a) -- (l2b);
    \draw[connection] (l2c) -- (l2d);
    \draw[dashed_connection] (l2a) -- (l2c);
    \draw[dashed_connection] (l2b) -- (l2d);
    
    \node[qubit] (l3a) at (-7, -1) {};
    \node[qubit] (l3b) at (-5, -1) {};
    \node[qubit] (l3c) at (-7, -1.5) {};
    \node[qubit] (l3d) at (-5, -1.5) {};
    \draw[connection] (l3a) -- (l3b);
    \draw[connection] (l3c) -- (l3d);
    \draw[dashed_connection] (l3a) -- (l3c);
    \draw[dashed_connection] (l3b) -- (l3d);
    
    \node[qubit] (l4a) at (-7, -2.5) {};
    \node[qubit] (l4b) at (-5, -2.5) {};
    \node[qubit] (l4c) at (-7, -3) {};
    \node[qubit] (l4d) at (-5, -3) {};
    \draw[connection] (l4a) -- (l4b);
    \draw[connection] (l4c) -- (l4d);
    \draw[dashed_connection] (l4a) -- (l4c);
    \draw[dashed_connection] (l4b) -- (l4d);
    
    \node at (-2, 3) {\Large $F_1$};
    
    \node[qubit] (m1a) at (-3, 1.25) {};
    \node[qubit] (m1b) at (-1, 1.25) {};
    \node[qubit] (m1c) at (-3, 0.75) {};
    \node[qubit] (m1d) at (-1, 0.75) {};
    \draw[connection] (m1a) -- (m1b);
    \draw[connection] (m1c) -- (m1d);
    \draw[dashed_connection] (m1a) -- (m1c);
    \draw[dashed_connection] (m1b) -- (m1d);
    
    \node[qubit] (m2a) at (-3, -1.75) {};
    \node[qubit] (m2b) at (-1, -1.75) {};
    \node[qubit] (m2c) at (-3, -2.25) {};
    \node[qubit] (m2d) at (-1, -2.25) {};
    \draw[connection] (m2a) -- (m2b);
    \draw[connection] (m2c) -- (m2d);
    \draw[dashed_connection] (m2a) -- (m2c);
    \draw[dashed_connection] (m2b) -- (m2d);
    
    \node at (2, 3) {\Large $F_2$};
    
    \node[qubit] (r1a) at (1, -0.25) {};
    \node[qubit] (r1b) at (3, -0.25) {};
    \node[qubit] (r1c) at (1, -0.75) {};
    \node[qubit] (r1d) at (3, -0.75) {};
    \draw[connection] (r1a) -- (r1b);
    \draw[connection] (r1c) -- (r1d);
    \draw[dashed_connection] (r1a) -- (r1c);
    \draw[dashed_connection] (r1b) -- (r1d);
    
    \draw[arrow] (-4.5, 1.75) -- (-3.5, 1.75);
    \draw[arrow] (-4.5, 0.25) -- (-3.5, 0.25);
    \draw[arrow] (-4.5, -1.25) -- (-3.5, -1.25);
    \draw[arrow] (-4.5, -2.75) -- (-3.5, -2.75);
    
    \draw[arrow] (-0.5, 1.0) -- (0.5, 1.0);
    \draw[arrow] (-0.5, -2.0) -- (0.5, -2.0);
    
    \end{tikzpicture}
    }
    \caption{\textbf{Schematic illustration of nested distillation.}~Multiple noisy states are first grouped and purified in parallel at the first level, producing intermediate states of higher fidelity. These intermediate states are then recursively grouped and distilled in higher-level rounds, forming a nested hierarchy. Each level improves fidelity at the cost of exponentially increasing resource consumption: a two-level nesting requires at least four input states, and a three-level nesting requires at least eight. Compared to pumping, nested protocols achieve higher final fidelities but at the cost of significantly increased register usage.}
    \label{fig:nested}
\end{figure}
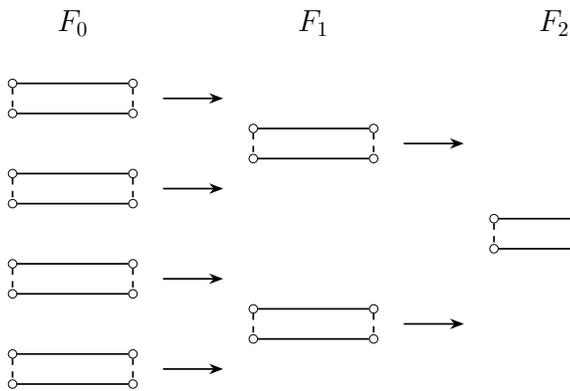

\begin{figure}[h]
    \centering
    \includegraphics[width=0.9\linewidth]{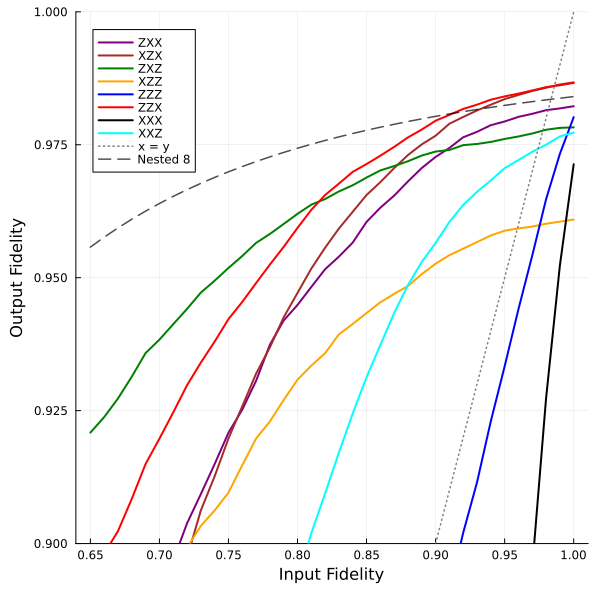}
    \caption{\textbf{Comparison between the standard two-colorable-graph nested (recurrence) protocols (P1–P2) and the nested protocol we used as a reference in the main text.}~The P1–P2 scheme, originally introduced for two-colorable graph states~\cite{aschauer2005}, alternates local CNOT operations between vertex sets $V_A$ and $V_B$ and performs postselection on the measurement outcomes of one copy. The additional nested protocol we present inserts twirling between successive rounds, which symmetrizes the accumulated errors. The nested protocol with twirling achieves higher output fidelities and success probabilities than the standalone P1–P2 scheme.
    The legend labels (ZZZ, ZZX, ZXZ, ZXX, XZZ, XZX, XXZ, XXX) correspond to the measurement bases applied in each of the three successive distillation rounds, (equivalently indicating the sequence of P1 and P2 subprotocols applied in each run). In the asymptotic case the P1-P2 recurrence protocol will outperform the "nested with twirling" protocol, but for the particular case of a GHZ state and only 3 rounds of purification, "nested with twirling" works better -- this is chiefly due to the very "biased" coloring of a GHZ graph state (one single qubit of color A, while all other qubits are of color B).
    All data in this figure assume a local gate error rate $p=0.01$ and a measurement error rate $\eta=0.01$, matching the noise parameters used in the main text. We therefore adopt the nested protocol with twirling, rather than the elementary P1–P2 recurrence, as the baseline for a fair comparison with our optimized GHZ-preserving distillation circuits.}
    \label{fig:sequence}
\end{figure}

\subsection{Hashing and Breeding Protocol}
For completeness, we briefly discuss the asymptotic entanglement purification protocols known as hashing and breeding schemes. Both protocols operate in the limit of infinitely many copies $N \rightarrow \infty$, where (in the absence of gate noise) they can get arbitrarily close to unit fidelity of the output states~\cite{bennett1996mixed,aschauer2005,bennett1996,dur2007}.

The breeding protocol assumes that the participating parties already share a number of perfectly entangled states in addition to the noisy copies to be purified. These auxiliary pure states are used to extract classical information about the error configuration of the ensemble without destroying any of the noisy states. Once all error syndromes are learned, appropriate local corrections yield a collection of purified states. While conceptually elegant, this approach relies on the unrealistic assumption that a sufficient supply of pre-purified GHZ (or Bell) states is available at the outset~\cite{bennett1996}.

The hashing protocol removes the need for pre-purified ancillas but requires simultaneous access to a large ensemble of noisy states. By sacrificing a subset of copies through joint local operations and measurements, one learns parity information about the remaining states and can deterministically correct them in the asymptotic limit. The theoretical yield of both hashing and breeding is $D=1-S(\rho)$, where $S(\rho)$ is the von Neumann entropy of the initial mixed state~\cite{bennett1996mixed,Maneva2000}.

In practice, neither protocol is well suited for hardware-limited architectures and realistic noisy operations. Gate errors and measurement imperfections rapidly degrade the accuracy of the parity information that these asymptotic schemes rely on. Moreover, breeding presupposes ideal ancillary resources that are unavailable in realistic quantum networks, while hashing demands a large number of qubit registers to store and manipulate many GHZ states simultaneously. Since the methods developed in this work focus on finite-size, circuit-level GHZ distillation under realistic noise and register constraints, these asymptotic schemes are included here only for theoretical comparison and are not used as performance benchmarks.

\section{Cost Function for Circuit Optimization}
\label{app:cost_function}

The circuit optimization procedure aims to identify GHZ-preserving circuits that maximize the fidelity of the output state under realistic noise and hardware constraints. Each candidate circuit $\mathcal{C}$ is evaluated given the following fixed input parameters:

\begin{itemize}
    \item $f_{\mathrm{in}}$: initial fidelity of each raw GHZ state,
    \item $p_{\mathrm{gate}}$: noise rate for each gate,
    \item $\eta_{\mathrm{meas}}$: measurement error rate,
    \item $R$: available register size per node (i.e., number of qubits each node can store),
    \item $N$: number of raw GHZ states used in the protocol.
\end{itemize}

Simulating the execution of circuit $\mathcal{C}$ under noise yields:
$f_{\mathrm{out}}(\mathcal{C})$: fidelity of the resulting GHZ state(s),
$P_{\mathrm{succ}}(\mathcal{C})$: probability of successful protocol completion (i.e., not being rejected due to measurement outcomes).

We consider two optimization settings:

\paragraph{Fidelity-maximization (unconstrained):}
\begin{equation}
\max_{\mathcal{C} \in \mathcal{G}} \; f_{\mathrm{out}}(\mathcal{C}),
\end{equation}
where $\mathcal{G}$ is the set of GHZ-preserving circuits of fixed form (e.g., constrained by $R$ and $N$). In this mode, the success probability $P_{\mathrm{succ}}$ is not considered.

\paragraph{Fidelity under success constraint:}
\begin{equation}
\begin{aligned}
\max_{\mathcal{C} \in \mathcal{G}} \quad & f_{\mathrm{out}}(\mathcal{C}) \\
\text{subject to} \quad & P_{\mathrm{succ}}(\mathcal{C}) \geq P_{\min},
\end{aligned}
\end{equation}
where $P_{\min}$ is a user-defined lower bound on acceptable success probability.

In both cases, $f_{\mathrm{out}}(\mathcal{C})$ and $P_{\mathrm{succ}}(\mathcal{C})$ are estimated using an $O(1)$-time simulator that tracks permutations of GHZ-basis states under the action of GHZ-preserving gates, averaged over many Monte Carlo samples to account for stochastic noise realizations. Optimization is performed using a genetic algorithm, where each circuit $\mathcal{C}$ is represented as a fixed-length genome encoding a sequence of gates from the H and B groups, optionally followed by projective measurements in the $X$ or $Z$ basis on selected qubits.

\end{document}